\definecolor{darkgreen}{rgb}{0.0,0,0.9}
\let\footnote=\endnote
\newcommand*{\bbar}[1]{\bar{\bar{#1}}}
\let\R\relax
\newcommand*{\R}{\mathbb{R}}
\newcommand*{\Rplus}{\mathbb{R_+}}
\newcommand*{\Qplus}{\mathbb{Q_+}}
\newcommand*{\suppress}[1]{}
\def\bvec#1{{\boldsymbol #1}}
\newcommand*{\vecz}{\bvec{0}}
\def\range#1{[#1]}
\def\calO{\mathcal{O}}
\def\calX{\mathcal{X}}
\newcommand{\CN}{\mbox{${\cal N}$}}
\newcommand{\CS}{\mbox{${\cal S}$}}
\DeclareMathOperator{\prx}{prox}
\newcommand\bbeta{\boldsymbol{\beta}}
\newcommand\llambda{\boldsymbol{\lambda}}
\newcommand{\prf}[1]{\proof{Proof.}#1\Halmos
\endproof}
\newcommand\cb{\boldsymbol{\mathit{c}}}
\renewcommand\ee{\boldsymbol{\mathit{e}}}
\newcommand\gb{\boldsymbol{\mathit{g}}}
\newcommand\pp{\boldsymbol{\mathit{p}}}
\newcommand\rr{\boldsymbol{\mathit{r}}}
\newcommand\vv{\boldsymbol{\mathit{v}}}
\newcommand\yy{\boldsymbol{\mathit{y}}}
\newcommand\zz{\boldsymbol{\mathit{z}}}
\newcommand\xx{\boldsymbol{\mathit{x}}}
\begin{document}

\RUNAUTHOR{Hosseini and Vazirani}
\RUNTITLE{Nash-Bargaining-Based Models for Matching Markets}
\TITLE{Nash-Bargaining-Based Models for \\
Matching Markets: One-Sided and Two-Sided; \\
Fisher and Arrow-Debreu}

\ARTICLEAUTHORS{%
\AUTHOR{Mojtaba Hosseini}
\AFF{Tippie College of Business, University of Iowa, IA \EMAIL{mojtaba-hosseini@uiowa.edu}} 
\AUTHOR{Vijay V. Vazirani}
\AFF{Computer Science Department, University of California, Irvine, CA, \EMAIL{vazirani@ics.uci.edu}}
} 

\ABSTRACT{%
This paper addresses two deficiencies of models in the area of matching-based market design. The first arises from the recent realization that the most prominent solution that uses cardinal utilities, namely the Hylland-Zeckhauser (HZ) mechanism \citep{hylland}, is intractable;  computation of even an approximate equilibrium is PPAD-complete \citep{VY-HZ, HZ-hardness}. The second is the extreme paucity of models that use cardinal utilities.
Our paper addresses both these issues by proposing Nash-bargaining-based matching market models. Since the Nash bargaining solution is captured by a convex program, efficiency follows. In addition, it possesses several desirable game-theoretic properties. Our approach yields a rich collection of models: for one-sided as well as two-sided markets, for Fisher as well as Arrow-Debreu settings, and for a wide range of utility functions, all the way from linear to Leontief. 
We give very fast implementations for these models using Frank-Wolfe and Cutting Plane algorithms. These help solve large instances with several thousand agents and goods in a matter of minutes on a PC, even for a one-sided matching market under piecewise-linear concave utility functions and a two-sided matching market under linear utility functions. In contrast, using HZ, going beyond even $n = 10$ is prohibitive. Several new ideas were needed, beyond the standard methods, to obtain these implementations. In particular, we present several lower bounding schemes, which not only help improve the convergence of our solution methods but also shed light on fairness properties of the Nash-bargaining-based models.

}%


\KEYWORDS{Matching-based market design, Nash bargaining, convex optimization, Frank-Wolfe algorithm, cutting planes, general equilibrium theory, one-sided markets, two-sided markets} \HISTORY{}

\maketitle

  \section{Introduction}
\label{sec:intro}

The recent computer science revolutions of the Internet and mobile computing led to the launching of highly impactful and innovative one-sided and two-sided matching markets, such as online advertisement platforms (Google Ads), vacation rentals (Airbnb, VRBO), ride-hailing (Uber, Lyft), food delivery (Uber Eats, Postmates, GrubHub, Doordash, Instacart), freelancing and employment services (Taskrabbit, Upwork, Fiverr, LinkedIn), on-demand beauty services (Glamsquad, PRIV), and online dating services (Match.com, OkCupid).
In turn, they led to a major revival of the area of matching-based market design, much of which is expounded in the upcoming book \cite{Echenique2023online}; see also \citep{Simons, va.talk, bimpikis2019spatial, shi2022optimal}. It is safe to assume that innovations will keep coming in the future and that new models and efficient mechanisms, with good properties, will be needed in the future. 

Within the area of matching-based market design, the most prominent solution that uses cardinal utilities is the Hylland-Zeckhauser (HZ) mechanism \citep{hylland} for a one-sided matching markets under linear utilities. (For a brief comparison of cardinal and ordinal utilities for matching markets, see Section \ref{sec.related}.)  HZ is based on  creating parity between demand and supply, i.e., it uses the power of a pricing mechanism, which gives it attractive properties: the allocations produced satisfy Pareto optimality and envy-freeness \citep{hylland} and the mechanism is incentive compatible in the large \citep{He2018pseudo}. 

A serious drawback of HZ, from the viewpoint of practical applicability, is lack of computational efficiency: the recent papers \cite{VY-HZ} and \cite{HZ-hardness} show that the problem of computing even an approximate equilibrium is PPAD-complete. More precisely, \cite{VY-HZ} showed membership in PPAD and remarked that it will not be surprising if intractability sets in even for the highly special case in which utilities of agents come from a tri-valued set, say $\{0, \frac{1}{2}, 1\}$; for bi-valued sets, they gave an efficient algorithm. Next, \cite{HZ-hardness} showed PPAD-hardness even for the case that utilities of agents come from a four-valued set; the tri-valued case is open. 

Hylland and Zeckhauser \citep{hylland} also studied the Arrow-Debreu extension of their model, in which agents have initial endowments of goods; however, they ended their investigation on finding instances that do not admit an equilibrium. In light of this difficulty, studying further generalizations made little sense. In particular, we are not aware of any two-sided matching market models that use cardinal utilities. This stands in sharp contrast with general equilibrium theory, which has defined and extensively studied several fundamental market models to address a number of specialized and realistic situations. That leads to the second issue addressed by our paper, namely the extreme paucity of models that use cardinal utilities. 

Our paper addresses both these issues by proposing Nash-bargaining-based matching market models. As is well known, the Nash bargaining solution is captured as an optimal solution to a convex program. If for such a program, a separation oracle can be implemented in polynomial time, then using the ellipsoid algorithm, one can get as good an approximation as desired in time that is polynomial in the number of bits of accuracy required \citep{GLS, Vishnoi.book}. For all models defined in this paper, the constraints of the convex programs are linear, thereby ensuring zero duality gap and easy solvability. 

A second gain from the move to Nash bargaining is that it yields a plethora of matching market models, not only one-sided but also two-sided. Moreover, all models work for Fisher as well as  Arrow-Debreu settings; interestingly enough, the latter are not much harder than the former. We present five one-sided market models, covering a large range of utility functions, all the way from linear to Leontief. When generalized appropriately, all these models admit counterparts in the two-sided setting as well; to illustrate the style of generalization, we present a two-sided market model with linear utilities only.

The game-theoretic properties of the Nash bargaining solution include: it satisfies Pareto optimality and symmetry, and since it maximizes the product of utilities of agents, the allocations it produces are remarkably fair. The latter has been noted by several researchers \citep{bertsimas2011price, Nash-Unreasonable, Abebe-MM-Truthful, Moulin2018fair} and has been further explored under the name of \textit{Nash Social Welfare} \citep{cole2017convex,cole2018approximating,branzei2022nash}. An important property that has been sacrificed, compared to HZ, is envy-freeness. In complete absence of envy-freeness, Pareto optimality has little meaning, since then one agent can be given the most desirable goods. However, as shown in \cite{Nash-Combinatorial}, the Nash-bargaining-based models do satisfy an alternative property called the {\em equal-share fairness property}, which disallows highly skewed allocations. Under this property, for linear utility functions, each agent $i$ must get at least $\frac{1}{2n}$ fraction of the {\em total} utility which $i$ could have gotten by being allocated {\em all} of the goods. In Section \ref{sec.CPs-bounding} we show the same result for more general concave utility functions, and show that strictly tighter bounds exist for several matching markets.

The following question arises: Is the shift from pricing to Nash bargaining a sound one, i.e., is there a fundamental connection between the two types of models? Based on \cite{Nash-Combinatorial}, Section \ref{sec.connection} provides such a connection via the celebrated Eisenberg-Gale convex program.

We note that the origins of the idea of operating markets via Nash bargaining go back to \cite{va.rational}. For the linear case of the Arrow-Debreu market model, instead of seeking allocations via a pricing mechanism, \cite{va.rational} formulated it as a Nash bargaining game and gave a combinatorial, polynomial time algorithm for solving the underlying convex program. In contrast, under HZ, equilibrium prices are not captured by any known mathematical construct, regardless of its computational complexity. Furthermore, the only known method for conducting an exhaustive search for obtaining an HZ equilibrium is algebraic cell decomposition \citep{Basu1995}; its use for computing HZ equilibria was studied in \cite{HZ-Algebraic-Cell}. Each iteration of this method is time-consuming, and as a result, HZ is viable for only very small values of $n$, not exceeding 10. 

As is well known, polynomial time solvability is often just the beginning of the process of obtaining an ``industrial grade''  implementation. Towards this end, we give very fast implementations as well as experimental results for all five of our one-sided market models and the most basic two-sided model. In particular, our implementation can solve very large instances, with several thousand agents and goods, in a matter of minutes even for a one-sided matching market under piecewise-linear concave utility functions and a two-sided matching market under linear utility functions. In Section \ref{sec.ideas} we have described how the standard methods needed to be adapted to the special intricacies of our convex programming formulations, in order to obtain these very fast implementations.  



\subsection{A Connection Between HZ and Nash-Bargaining-Based Models}
\label{sec.connection}

In this section, we attempt a comparative study of one-sided matching markets under the two types of mechanisms, pricing and Nash bargaining. The following two questions arise:
\begin{enumerate}
	\item  Is this shift from a pricing to a Nash bargaining mechanism a sound one, i.e., is there a fundamental connection between the resulting types of models? 
	\item  Is either type of model reducible to the other?
\end{enumerate}

The answer to the second question is ``No'' since under an affine transformation of the utility function of an agent, the Nash bargaining solution and an HZ equilibrium change in different ways: The former solution undergoes the same affine transformation, as stated by one of the axioms of Nash bargaining, see Section \ref{sec.Nash}. On the other hand, the latter remains unchanged; this was shown in \cite{VY-HZ} via the following statement. Let agent $i$'s utility function be $u_i = \{u_{i1}, u_{i12}, \ldots, u_{in} \}$. For two numbers $s > 0$ and $h \geq 0$, define $u_i' = \{u_{i1}', u_{i12}', \ldots, u_{in}' \}$ as follows: $\forall j \in G, \ u_{ij}' = s \cdot u_{ij} + h$. Let $I'$ be the instance obtained by replacing $u_i$ by $u_i'$ in $I$, keeping utilities of other agents unchanged. Then $(\pp, \xx)$ is an equilibrium for $I$ if and only if it is for $I'$.

The answer to the first question is ``Yes'', due to the connection established in \cite{Nash-Combinatorial}. We provide a brief synopsis of the argument below. 

First consider the linear Fisher market model defined in Section \ref{sec.Fisher}. The setup of the {\em linear Fisher problem (LFP)} is identical, except that the agents don't have any money, so this is not really a market model. The problem is to design a polynomial time mechanism for distributing all the goods among the agents so that the allocation satisfies Pareto optimality.

\begin{maxi}
	{} {\sum_{i \in A}  {\log \left(\sum_{j \in G}  {u_{ij} x_{ij}}\right)}}
	{\label{eq.EG-LF}}
	{}
	\addConstraint{\sum_{i \in A} {x_{ij}}}{\leq 1 }{\quad \forall j \in G}
	\addConstraint{\xx}{\geq 0.}
\end{maxi}

\cite{Nash-Combinatorial} give two such mechanisms. The first is to give each agent 1 Dollar, thereby transforming LFP to the linear Fisher market model, and ask for an equilibrium allocation  satisfying Pareto optimality. This can be obtained in polynomial time, via a combinatorial algorithm \citep{DPSV}, or as an optimal solution to the celebrated Eisenberg-Gale convex program \citep{eisenberg}, given in (\ref{eq.EG-LF}).

The second is to view LFP as a Nash bargaining problem; Pareto optimality is one of the axioms which it satisfies, see Section \ref{sec.Nash}. This is done by defining a convex, compact set $\CN \subseteq \R_+^n$, called the feasible set, and a point $\cb \in \CN$, called the disagreement point, see Section \ref{sec.Nash} for details. In this case, $\cb = \vecz$, and $\CN$ will consist of all possible vectors of utilities to the $n$ agents that can be obtained by partitioning 1 unit each of all $m$ goods among the agents. It is easy to see that the resulting convex program will be precisely Eisenberg-Gale convex program. Therefore, the two mechanisms are identical!

Next, \cite{Nash-Combinatorial} define the {\em linear Fisher unit demand problem (LFUP)} to be LFP with the additional requirements that $m = n$ and that each agent should get a total of one unit of goods. As a result, every feasible allocation is a fractional perfect matching over the $n$ agents and $n$ goods.
Now it turns out that when LFUP is solved via the pricing mechanism, it is identical to HZ, and when it is solved via the Nash bargaining mechanism it is identical to {\em 1LF}, i.e., our most basic Nash-bargaining-based model, see Section \ref{sec.1-models}. This establishes a connection between HZ and the Nash-bargaining-based models; it is illustrated in Figure \ref{fig:main}.


\begin{figure}
	\centering
	\includegraphics[width=0.65\linewidth]{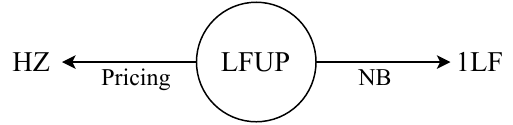}
	\caption{Figure illustrating connection between HZ and NB.}
	\label{fig:main}
\end{figure}

\subsection{Ideas Needed Beyond Standard Methods}
\label{sec.ideas}

Our solution methods, namely Frank-Wolfe (FW) algorithm and cutting-plane (CP) algorithm, rely on iterative linear approximations of convex programs for the one-sided and two-sided market models. For efficient implementation of these algorithms, one needs to pay attention to the structural properties of these models as described below.

We implement a FW algorithm for solving instances of the matching markets with general utility functions. Given that the feasible region of our convex programs corresponds to a matching polytope, we exploit this property to produce the ``atom'' solutions in FW efficiently. The solution produced by FW is therefore a sparse convex combination of a set of integral perfect matchings. For piecewise linear utilities, we present smooth counterparts and efficient procedures for deriving them. For matching market models over multiple goods, we adapt a Sinkhorn-type matrix scaling algorithm to derive the atom solutions efficiently.


As a benchmark, we also implement a central cutting-plane algorithm, which produces more effective cuts compared to a vanilla cutting-planes, since central points are more likely to be in the relative interior of the feasible region. We additionally implement several enhancement techniques including cut loop management, variable elimination, upper and lower bounding schemes.

A high quality initial solution is also crucial for fast convergence of both algorithms. We show how such solutions can be produced by solving matching problems induced by convexity properties of our matching market models. We also show that when the optimal allocation is integral, the initial solution is optimal, thus both CP and FW terminate after only one iteration.

Finally, we introduce bounding schemes that improve the equal-share bounds and extend them to general utility functions. Besides their game theoretic implications, we use these bounds to alleviate the numerical issues due to the logarithmic form of the objective functions in our convex programs. For FW, we use these bounds to obtain Lipschitz continuous gradients, thus improve the convergence rate of FW.
The logarithmic form of the objective functions also require positive utilities for each agent at each iteration of CP. However, since CP is an outer-approximation algorithm, it is possible that in an iteration of CP, the utilities of some agents may become zero. This makes the solution unbounded, and one cannot extract a cut based on this solution. We resolve this issue by adding optimality lower bounds to the linear programs associated with CP. 

\subsection{Our Results}
\label{sec.results}

In Section \ref{sec.1-models}, we give five basic models for one-sided matching markets covering a wide range of utility functions. For each model, we also give a natural application. In Section \ref{sec.2-models} we give a model for the most basic two-sided matching market, which can be easily enhanced to five more models in a manner analogous to the other five one-sided matching market models given in Section \ref{sec.1-models}. 

In Section \ref{sec.CPs}, we give convex programs capturing the Nash-bargaining-based solution for all the models mentioned above. These convex programs can be solved to $\epsilon$ precision in time that is polynomial in the size of the input and $\log{1/\epsilon}$ via ellipsoid-based methods \citep{GaleS, Vishnoi.book}.  We also provide optimality bounds for these problems, which extend the \textit{equal-share fairness property} to general utility functions and substantially improve them for the linear case. We further use these bounds for enhancing the convergence rate of our solution methods.

In Section \ref{sec.SM}, we present two solution schemes based on Frank-Wolfe method \citep{frank1956algorithm, jaggi2013revisiting} and central cutting-plane method \citep{elzinga1975central}. Both methods rely on linear approximations of the convex programs; the former is solver-free and purely combinatorial, whereas the latter uses an LP solver. We present enhancement techniques as well as an overview of the way structural properties of these problems can be exploited. 

We present computational experiments in Section \ref{sec.Exp} and assess effectiveness of our methods in handling large-scale instances. We demonstrate that our FW implementation can solve instances with up to $n=20,000$ agents/goods in matching markets with linear utilities, even for two-sided, in a matter of minutes. Our tests on matching markets with separable and nonseparable piecewise linear utility functions, as well as markets over multiple goods, further highlight efficiency of our solution methods in handling large instances of these problems. Section~\ref{sec:conclusion} concludes this paper.

An earlier version of this paper was presented in the 13$^{\text{th}}$ Innovations in Theoretical Computer Science Conference \citep{hosseini2022nash}. We extend the theoretical, algorithmic and experimental results in several directions, including but not limited to: (\textit{i}) We introduce new Nash-bargaining-based models over multiple goods and present a tailored generalized Frank-Wolfe algorithm leveraging the Sinkhorn-Knopp's matrix scaling algorithm. (\textit{ii}) We present several optimality bounds for our models, and show that they extend and strictly improve the equal-share matching bound. We also use these bounds to improve convergence of our solution methods. (\textit{iii}) We present Frank-Wolfe algorithms for all of our six matching market models and introduce new enhancement techniques for the Cutting Plane algorithm. (\textit{iv}) We enhance FW by orders of magnitude via a fast implementation of the Auction algorithm. (\textit{v}) We extend our computational experiments and highlight efficiency of our algorithms in handling very large-scale instances.

\subsection{Related Results}
\label{sec.related}

Recently \cite{VY-HZ} undertook the first comprehensive study of the computational complexity of HZ.  They gave an example which has only irrational equilibria; as a consequence, this problem is not in PPAD. They showed that membership of the exact equilibrium computation problem is in the class FIXP and approximate equilibrium is in the class PPAD. They also gave a combinatorial, strongly polynomial time algorithm for computing an equilibrium for the case of dichotomous utilities, 
and they extended this result to the case of bi-valued utilities, i.e., each agent's utility for individual goods comes from a set of cardinality two, though the sets may be different for different agents. Next, \cite{HZ-hardness} showed PPAD-hardness of approximate HZ equilibrium even if utilities of agents come from a four-valued set; the tri-valued case is open. 

The success of our implementations, using available solvers, naturally raises the question of finding efficient combinatorial algorithms with low running times for our proposed market models. \cite{Nash-Combinatorial} has given such algorithms, based on the techniques of multiplicative weights update (MWU) and conditional gradient descent (CGD), for several of our one-sided and two-sided models. They also defined and developed algorithms for the non-bipartite matching market model; this has applications to the roommate problem. Lastly, they gave the connection between HZ and the Nash-bargaining-based models stated in Section \ref{sec.connection}. 

The extension of one-sided matching markets to the setting in which agents have initial endowments of goods, called the Arrow-Debreu setting, has several natural applications beyond the original Fisher setting, e.g., allocating students to rooms in a dorm for the next academic year, assuming their current room is their initial endowment. The issue of obtaining such an extension of the HZ scheme was studied by Hylland and Zeckhauser. However, this culminated in an example which inherently does not admit an equilibrium \citep{hylland}.

As a recourse, \cite{Echenique2019constrained} introduced the notion of an {\em $\alpha$-slack Walrasian equilibrium}. This is a hybrid between the Fisher and Arrow-Debreu settings in which agents have initial endowments of goods and for a fixed $\alpha \in (0, 1]$, the budget of each agent, for given prices of goods, is $\alpha + (1 - \alpha) \cdot m$, where $m$ is the value for her initial endowment. Via a non-trivial proof, using the Kakutani Fixed Point Theorem, they proved that an $\alpha$-slack equilibrium always exists. A pure Arrow-Debreu model was proposed in \cite{Garg-ADHZ} by suitably relaxing the notion of an equilibrium to an {\em $\epsilon$-approximate equilibrium}. Their proof of existence of equilibrium follows from that of \cite{Echenique2019constrained}. 

An interesting recent paper \cite{Abebe-MM-Truthful} defines the notion of a random partial improvement mechanism for a one-sided matching market. This mechanism truthfully elicits the cardinal preferences of the agents and outputs a distribution over matchings that approximates every agent’s utility in the Nash bargaining solution.

In recent years, several researchers have proposed Hylland-Zeckhauser-type mechanisms for a number of applications \citep[e.g., see ][]{Budish2011combinatorial, He2018pseudo, Le2017competitive, Mclennan2018efficient}. The basic scheme has also been generalized in several different directions, including two-sided matching markets, adding quantitative constraints, and to the setting in which agents have initial endowments of goods instead of money, see  \citep{Echenique2019constrained, Echenique2019fairness}.
  
\paragraph{Ordinal vs Cardinal utilities:}
Under ordinal utilities, the agents provide a total preference order over the goods and under cardinal utilities, they provide a non-negative real-valued function. Both forms have their own pros and cons and neither dominates the other. Whereas the former is easier to elicit from agents, the latter is far more expressive, enabling an agent to not only report if she prefers good $A$ to good $B$ but also by how much. \cite{Abdulkadirouglu-Cardinal} exploit this greater expressivity of cardinal utilities to give mechanisms for school choice which are superior to ordinal-utility-based mechanisms. 

Example \ref{ex.GTV}, taken from \cite{Garg-ADHZ}, provides a very vivid illustration of the advantage of cardinal utilities over ordinal ones in one-sided matching markets. 

\begin{example}[Cardinal vs Ordinal Utilities]\label{ex.GTV}
Consider an instance with three types of goods, $T_1, T_2, T_3$, which are present in the proportion of $(1\%, \ 97\%, \ 2\%)$. Based on their utility functions, the agents are partitioned into two sets $A_1$ and $A_2$, where $A_1$ constitute $1\%$ of the agents and $A_2$, $99\%$. The utility functions of agents in $A_1$ and $A_2$ for the three types of goods are $(1, \ \epsilon, \ 0)$ and $(1, \ 1- \epsilon, \ 0)$, respectively, for a small number $\epsilon > 0$. The main point is that whereas  agents in $A_2$ marginally prefer $T_1$ to $T_2$, those in $A_1$ overwhelmingly prefer $T_1$ to $T_2$. 

Clearly, the ordinal utilities of all agents in $A_1 \cup A_2$ are the same. Therefore, a mechanism based on such utilities will not be able to make a distinction between the two types of agents. On the other hand, the HZ mechanism, which uses cardinal utilities, will fix the price of goods in $T_3$ to be zero and those in $T_1$ and $T_2$ appropriately so that by-and-large the bundles of $A_1$ and $A_2$ consist of goods from $T_1$ and $T_2$, respectively.
\end{example}

\section{Preliminaries}
\label{sec.preliminaries}

\subsection{The Nash Bargaining Game} 
\label{sec.Nash}

An {\em $n$-person Nash bargaining game} consists of a pair $(\CN, \cb)$, where $\CN \subseteq \R_+^n$ is a compact, convex set called the {\em feasible set} -- its elements are vectors whose components are utilities that the $n$ players can simultaneously accrue. Point $\cb\in \CN$ is the {\em disagreement point} -- its components are utilities which the $n$ players accrue if they decide not to participate in the proposed solution. 

The set of $n$ agents will be denoted by $A$ and the agents will be numbered $1, 2, \dots, n$. Instance $(\CN, \cb)$ is said to be {\em feasible} if there is a point in $\CN$ at which each agent does strictly better than her disagreement utility, i.e., $\exists \vv \in \CN$ such that $\forall i \in A, \ v_i > c_i$, and {\em infeasible} otherwise. In game theory it is customary to assume that the given Nash bargaining problem $(\CN, \cb)$ is feasible; we will make this assumption as well.

The solution to a feasible instance is the point $\vv \in \CN$ that satisfies the following four axioms:

\begin{enumerate}
\item
{\em  Pareto optimality:}  No point in $\CN$ weakly dominates $\vv$.
\item
{\em  Symmetry:} If the players are renumbered, then a corresponding renumber the coordinates of $\vv$ is a solution to the new instance.
\item
{\em  Invariance under affine transformations of utilities:} If the utilities of any player are redefined by
multiplying by a scalar and adding a constant, then the solution to the transformed problem is obtained by
applying these operations to the particular coordinate of $\vv$.
\item
{\em  Independence of irrelevant alternatives:} If $\vv$ is the solution to $(\CN, \cb)$, and 
$\CS \subseteq \R_+^n$ is a compact, convex set satisfying $c \in \CS$ and  $\vv \in \CS \subseteq \CN$, then $\vv$ is also the solution to $(\CS, \cb)$.
\end{enumerate}

Via an elegant proof, Nash proved:

\begin{theorem}[\citealp{Nash1953two}]
\label{thm.nash}
If the game $(\CN, \cb)$ is feasible then there is a unique point in $\CN$ satisfying the axioms stated above, which is obtained by maximizing $\Pi_{i \in A}  {(v_i - c_i)}$ over $\vv \in \CN$.
\end{theorem}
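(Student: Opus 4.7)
The plan is to follow Nash's classical two-part argument. First I would exhibit the candidate $\vv^* = \arg\max_{\vv \in \CN} \prod_{i \in A}(v_i - c_i)$ and verify that it satisfies the four axioms; then conversely I would show that any point $\vv^\#$ satisfying the four axioms must equal $\vv^*$.

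Existence of $\vv^*$ follows from compactness of $\CN$ and continuity of the objective; feasibility of the instance forces the maximum to be strictly positive, so $\vv^*$ lies in the open region $\{\vv : v_i > c_i\ \forall i\}$, on which $\sum_i \log(v_i - c_i)$ is strictly concave, giving uniqueness of the maximizer. Verifying the four axioms for $\vv^*$ is then routine. Pareto optimality holds because any weak dominator of $\vv^*$ would give a strictly larger product. Symmetry holds because the objective is invariant under simultaneous permutation of agents and coordinates. Invariance under affine transformations $v_i \mapsto \alpha_i v_i + \beta_i$ (with $\alpha_i > 0$ and $c_i$ transforming identically) simply rescales the objective by the positive factor $\prod_i \alpha_i$. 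Finally, IIA is immediate, since a maximizer over a larger convex set remains a maximizer over any smaller convex subset that still contains it.

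The substantive step is uniqueness of the axiomatic solution. Let $\vv^\#$ be any point satisfying all four axioms for $(\CN, \cb)$. Using invariance, I would apply the coordinatewise transformation $v_i \mapsto (v_i - c_i)/(v^*_i - c_i)$, which sends $\cb$ to $\mathbf{0}$ and $\vv^*$ to $\mathbf{1} = (1,\ldots,1)$; it then suffices to show $\vv^\# = \mathbf{1}$ in this normalized setting. The geometric crux is that first-order optimality of $\mathbf{1}$ for $\max_{\vv \in \CN} \sum_i \log v_i$ on the convex set $\CN$ yields the supporting-hyperplane bound $\sum_i (v_i - 1) \leq 0$ for every $\vv \in \CN$. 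Consequently $\CN$ is contained in the compact, convex, symmetric set
\[
\CS = \{\vv \in \R_+^n : \textstyle\sum_i v_i \leq n\}.
\]
By symmetry of $(\CS, \mathbf{0})$, any axiomatic solution there must have all coordinates equal, and Pareto optimality forces it to lie on $\sum_i v_i = n$, hence to equal $\mathbf{1}$. Since $\mathbf{1} \in \CN \subseteq \CS$ and $\mathbf{0} \in \CN$, the IIA axiom (applied with $\CS$ as the larger set and $\CN$ as the smaller) transfers this solution down to $(\CN, \mathbf{0})$, so $\vv^\# = \mathbf{1} = \vv^*$, as required.

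The main obstacle is constructing the symmetric ``sandwich'' set $\CS$ and using the first-order optimality of $\vv^*$ to certify $\CN \subseteq \CS$; once this geometric bridge is in place, the axioms of symmetry, Pareto optimality, and IIA mesh together almost mechanically. A mild technical point is to ensure that the normalizing affine transformations respect the ambient $\R_+^n$ structure of the game, which can be arranged by first shifting so that $\cb = \mathbf{0}$ and then using feasibility to keep all relevant points in the nonnegative orthant.
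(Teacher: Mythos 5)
The paper does not prove this theorem; it states it as a cited result of Nash and supplies no argument of its own. Your sketch is a correct reproduction of Nash's classical proof: existence and uniqueness of the product maximizer from compactness and strict concavity of the log-transformed objective, direct verification of the four axioms, normalization by affine invariance so that $\cb = \mathbf{0}$ and $\vv^* = \mathbf{1}$, the first-order supporting-hyperplane bound $\sum_i v_i \le n$ to sandwich $\CN$ inside a symmetric set $\CS$, symmetry plus Pareto to force the solution of $(\CS, \mathbf{0})$ to be $\mathbf{1}$, and IIA to transfer back to $\CN$.

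One caveat worth tightening: after the normalization $v_i \mapsto (v_i - c_i)/(v^*_i - c_i)$, the image of $\CN$ is not guaranteed to lie in $\R_+^n$, so the claimed containment $\CN \subseteq \CS = \{\vv \in \R_+^n : \sum_i v_i \le n\}$ does not follow from the supporting hyperplane alone. Your closing remark about ``respecting the ambient $\R_+^n$ structure'' gestures at this, but the shift to $\cb = \mathbf{0}$ by itself does not fix it, since $\CN$ may contain points with $v_i < c_i$. The clean resolution, which is also what Nash effectively does, is to state the axioms for compact convex subsets of $\R^n$ rather than $\R_+^n$, take $\CS = \{\vv : \sum_i v_i \le n,\ v_i \ge -M\ \forall i\}$ for $M$ large enough (by compactness) to contain the normalized $\CN$, and observe that this $\CS$ is still symmetric, compact, and convex with unique symmetric Pareto-optimal point $\mathbf{1}$. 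This is a presentational wrinkle in the paper's setup rather than a flaw in your argument, but the proof should make the adjustment explicit.
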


Nash's solution to his bargaining game involves maximizing a concave function over a convex domain, and is therefore the optimal solution to the following convex program.

	\begin{maxi}
		{} {\sum\nolimits_{i \in A}  {\log (v_i - c_i)}}
			{\label{eq.CP-Nash}}
		{}
		\addConstraint{}{\vv \in \CN}
	\end{maxi}

As a consequence, if for a specific game, a separation oracle can be implemented
in polynomial time, then using the ellipsoid algorithm one can get as good an approximation to the solution of this convex program as desired in time polynomial in the number of bits of accuracy needed \citep{GLS, Vishnoi.book}.

\subsection{Fisher Market Model}
\label{sec.Fisher}

The {\em Fisher market model} consists of a set $A = \{1, 2, \ldots n\}$ of agents and a set $G = \{1, 2, \ldots, m\}$ of infinitely divisible goods. By fixing the units for each good, we may assume without loss of generality that there is a unit of each good in the market. Each agent $i$ has money $m_i \in \Qplus$.  

Let $x_{ij}, \ 1 \leq j \leq m$ represent a {\em bundle of goods allocated to agent $i$}. Each agent $i$ has a utility function $u: \R_+^m \rightarrow \Rplus$ giving the utility accrued by $i$ from a bundle of goods. We will assume that $u$ is concave and weakly monotonic. Each good $j$ is assigned a non-negative price, $p_j$. Allocations and prices, $\xx$ and $\pp$, are said to form an {\em equilibrium} if each agent obtains a utility maximizing bundle of goods at prices $\pp$ and the {\em market clears}, i.e., each good is fully sold to the extent of one unit and all money of agents is fully spent. We will assume that each agent derives positive utility from some good and for each agent, there is a good which gives her positive utility; clearly, otherwise we may remove that agent or good from consideration.

\subsection{Arrow-Debreu Market Model}
\label{sec.AD}

The Arrow-Debreu market model, also known as the {\em exchange model} differs from Fisher's model in that agents come to the market with initial endowments of good instead of money. The union of all goods in initial endowments are all the goods in the market. Once again, by redefining the units of each good, we may assume that there is a total of one unit of each good in the market. The utility functions of agents are as before. The problem now is to find non-negative prices for all goods so that if each agent sells her initial endowment and buys an optimal bundle of goods, the market clears. Clearly, if $\pp$ is equilibrium prices then so is any scaling of $\pp$ by a positive factor.

\subsection{Hylland-Zeckhauser Scheme}
\label{sec.HZ}

Let $A = \{1, 2, \ldots n\}$ be a set of $n$ agents and $G = \{1, 2, \ldots, n\}$ be a set of $n$ indivisible goods. The goal of the HZ scheme is to allocate exactly one good to each agent. However, in order to use the power of a pricing mechanism, which endows the HZ scheme with the properties of Pareto optimality and incentive compatibility in the large, it casts this one-sided matching market in the mold of a linear Fisher market as follows. 

Goods are rendered  divisible by assuming that there is one unit of probability share of each good, and utilities $\{u_{ij}\}$ are defined as in a linear Fisher market. Let $x_{ij}$ be the allocation of probability share that agent $i$ receives of good $j$. Then, $\sum_j {u_{ij} x_{ij}}$ is the {\em expected utility} accrued by agent $i$. Each agent has 1 dollar for buying these probability shares and each good $j$ has a price $p_j \geq 0$.

Beyond a Fisher market, an additional constraint is that the total probability share allocated to each agent is one unit, i.e., the entire allocation must form a {\em fractional perfect matching} in the complete bipartite graph over vertex sets $A$ and $G$. Subject to these constraints, each agent buys a utility maximizing bundle of goods. Another point of departure from a linear Fisher market is that in general, an agent's optimal bundle may cost less than one dollar, i.e., the agents are not required to spend all their money. Since each good is fully sold, the market clears. Hence these are defined to be {\em equilibrium allocation and prices}.

Clearly, an equilibrium allocation can be viewed as a doubly stochastic matrix. The Birkhoff-von Neumann procedure then extracts a random underlying perfect matching in such a way that the expected utility accrued to each agent from the integral perfect matching is the same as from the fractional perfect matching. Since {\em ex ante} Pareto optimality implies {\em ex post} Pareto optimality, the integral allocation will also be Pareto optimal.

\section{Nash-Bargaining-Based Models}
\label{sec.models}

\subsection{One-Sided Matching Markets}
\label{sec.1-models}

We will define five one-sided matching market models based on our Nash bargaining approach. For each model, we will also give a standard  application. For the case of linear utilities, we have singled out the Fisher and Arrow-Debreu versions, namely {\em 1LF} and {\em 1LAD}, since we will study both in some detail later in the paper. For more general utility functions we have defined only the Arrow-Debreu version; the Fisher version is obtained by setting all disagreement utilities to zero. 

It is  easy to see that the fourth one generalizes the first three; however, the earlier ones involve less notation and have an independent standing of their own, hence necessitating all four definitions. The fifth market has a different character. It involves two distinct types of goods; each agent wishes to get one unit of each type. We have assumed linear utilities for this market; generalizing to other utility functions is straightforward using the definitions of the previous four markets. 

Our one-sided matching market models consist of a set $A = \{1, 2, \ldots n\}$ of agents and a set $G = \{1, 2, \ldots, n\}$ of infinitely divisible goods; observe that there is an equal number of agents and goods. There is one unit of each good and each agent needs to be allocated a total of one unit of goods. Hence the allocation needs to be a fractional perfect matching, as defined next.

\begin{definition}\label{def.pm}
Let us name the coordinates of a vector $\xx \in \R_+^{n^2}$ by pairs $i, j$ for $i \in A$ and $j \in G$. Then $\xx$ is said to be a {\em fractional perfect matching} if
	\begin{align}
	    \forall i \in A: \ \sum\nolimits_{j\in G} {x_{ij}} = 1 \ \ \ \mbox{and} \ \ \ \forall j \in G: \ \sum\nolimits_{i\in A} {x_{ij}} = 1. \label{eq:matching_polytope}
	\end{align}
We denote the set of fractional perfect matchings by $\calX$.
\end{definition}

As mentioned in Section \ref{sec.preliminaries}, an equilibrium allocation can be viewed as a doubly stochastic matrix, and the Birkhoff-von Neumann procedure \citep{Birkhoff1946tres, von1953certain} can be used to extract a random underlying perfect matching in such a way that the expected utility accrued to each agent from the integral perfect matching is the same as from the fractional perfect matching. 

Next, we present the matching markets.
In Section \ref{sec.CPs} we prove that each of the matching markets defined below admits a convex program.

\subsubsection{Linear Fisher Nash bargaining one-sided matching market (1LF).} Under {\em 1LF}, each agent $i \in A$ has a linear utility function, as defined in Section \ref{sec.Fisher}. Corresponding to each fractional perfect matching $\xx$, there is a vector $\vv_{\xx}$ in the feasible set $\CN$; its components are the utilities derived by the agents under the allocation given by $\xx$. The disagreement point $\cb$ is the origin. Observe that the setup of {\em 1LF} is identical to that of the HZ mechanism; the difference lies in the definition of the solution to an instance. Its standard application is matching agents to goods.

\subsubsection{Linear Arrow-Debreu Nash bargaining one-sided matching market (1LAD).}
Under {\em 1LAD}, each agent $i \in A$ has a linear utility function, as above. Additionally, we are specified an initial fractional perfect matching $\xx_I$ which gives the initial endowments of the agents. Each agent has one unit of initial endowment over all the goods and the total endowment of each good over all the agents is one unit, as given by $\xx_I$. These two pieces of information define the utility accrued by each agent from her initial endowment; this is her disagreement point $c_i$. As stated in Section \ref{sec.Nash}, we will assume that the problem is feasible, i.e., there is a fractional perfect matching, defining a redistribution of the goods, under which each agent $i$ derives strictly more utility than $c_i$. Each vector $\vv \in \CN$ is as defined in {\em 1LF}. Henceforth, we will consider the slightly more general problem in which we are specified the disagreement point $\cb$ and not the initial endowments $\xx_I$. There is no  guarantee that $\cb$ comes from a valid fractional perfect matching of initial endowments. However, we still want the problem to be feasible. This model is applicable when agents start with an initial endowment of goods and exchange them to improve their happiness.

\subsubsection{Separable piecewise-linear concave Arrow-Debreu Nash bargaining one-sided matching market (1SAD).} This model is analogous to 1LAD, with the difference that each agent has a separable, piecewise-linear concave utility function, hence generalizing the linear utility functions specified in 1LAD. Economists model {\em diminishing marginal utilities} via concave utility functions. Since we are in a fixed-precision model of computation, we have considered separable, piecewise-linear concave (SPLC) utility functions. 
We next define these functions in detail.

For each agent $i$ and good $j$, function $f_{ij}: \Rplus \rightarrow \Rplus$ gives the utility derived by $i$ as a function of the amount of good $j$ she receives. Each $f_{ij}$ is a non-negative, non-decreasing, piecewise-linear, concave function. The overall utility of buyer $i$, $u_i(\xx_i)$, for bundle $\xx_i=(x_{i1},\ldots,x_{in})$ of goods, is additively separable over the goods, i.e., $u_i(\xx_i) = \sum_{j \in G} f_{ij}(x_{ij})$.

We will call each piece of $f_{ij}$ a {\em segment}. 
Number the segments of $f_{ij}$ in order of decreasing slope; throughout we will assume that these segments are indexed by $k$ and that $\kappa_{ij}$ is the number of segement.  
Let $u_{ijk}$ denote the rate at which $i$ accrues utility per unit of good $j$ received, when she is getting an allocation corresponding to this segment.
and let $l_{ijk}$ denote the amount of good $j$ represented by this segment; we will assume that the last segment in each function is of unbounded length and that  that $u_{ijk}$ and $l_{ijk}$ are rational numbers. 
Clearly, the maximum utility she can receive corresponding to this segment is $u_{ijk} \cdot l_{ijk}$. With this definition, we can express $f_{ij}(x_{ij})$ as a piecewise-linear concave function of the form
\begin{align}
    f_{ij}(x_{ij})=\min_{k\in \range{\kappa_{ij}}}\left\{u_{ijk}x_{ij}+b_{ijk}\right\},
\end{align}
where $b_{ijk}$ is the intercept of the $k^{\text{th}}$ segment, which can be recursively expressed as $b_{ij1}=0$ and $b_{ijk}=b_{ij(k-1)}- (u_{ijk}-u_{ij(k-1)})\sum_{h\le k}l_{ijh}$ for $k>1$.


\subsubsection{Non-separable piecewise-linear concave Arrow-Debreu Nash bargaining one-sided matching market (1NAD).} This model differs from {\em 1SAD} in that agents' utility functions are now assumed to be non-separable, piecewise-linear concave. These utility functions are very general and can be used to capture whether goods are complements or substitutes and much more. 
These functions are defined next. 

For each agent $i$, the parameter $\kappa_i$ specifies the number of hyperplanes used for defining the utility of $i$. The latter, $u_i(\xx_i)$, for bundle $\xx_i=(x_{i1},\ldots,x_{in})$ of goods is defined to be
\[ u_i(\xx_i) = \min_{k \in \range{\kappa_i}} \left\{ \sum\nolimits_{j \in G} {a_{ijk} x_{ij} + b_{ik} }\right\} , \]
where $a_{ijk}$ and $b_{ik}$ are non-negative rational numbers. Furthermore, $b_{ik} = 0$ for at least one value of $k$ so that the utility derived by $i$ from the empty bundle is zero. 

Leontief utilities is a fundamental special case of non-separable piecewise-linear concave utilities under which agents want goods in specified ratios. It is used for modeling utilities when goods are complements. In this case, for each agent $i$, we are specified a set $S_i \subseteq G$ of goods she is interested in, and rational numbers $a_{ij} > 0$ for $j\in S_i$, and her utility is characterized as
\begin{align*}
    u_i(\xx_i) = \min_{j \in S_i} \left\{ \frac{x_{ij}}{a_{ij}} \right\}
\end{align*}

Non-separable piecewise-linear concave utility functions are also related to robust sharing problems, where the agents face multiple scenarios and wish to get the highest utility under the worst-case scenario \citep[cf.][]{brown1979knapsack,yu1996max}.

\subsubsection{One-sided matching market over multiple types of goods.}
We turn next to more general settings in which each agent needs to be matched to goods in multiple sets. For instance, suppose we are given two sets of goods, $G_1$ and $G_2$, with $|G_1| = |G_2| = n$, and each agent needs to be matched to one good in each set. If the utility functions of agents are additively separable across $G_1$ and $G_2$, the answer is straightforward, namely independently solve the two matching market problems, $(A, G_1)$ and $(A, G_2)$. 

Next we consider the case of non-separable utility functions. Under the {\em linear Fisher Nash bargaining one-sided matching market over two types of goods}, abbreviated {\em 1LF2G}, the utility of agent $i \in A$ is defined to be 
\[ v_i = \sum_{j \in G_1, l \in G_2} {u_{ijl} x_{ijl}} , \]
where $u_{ijl}$ is the utility accrued by $i$ on obtaining one unit of $j \in G_1$ and one unit of $l \in G_2$. There are numerous natural applications of this problem, e.g., allocation of courses to students from two different majors \citep[see, e.g.,][for other relevant examples.]{carlier2010matching} Akin to {\em 1LF} defined above, one can define generalizations of {\em 1LF2G} in a straightforward manner.


\subsection{Two-Sided Matching Markets}
\label{sec.2-models}

Our two-sided matching market model consist of a set $A = \{1, 2, \ldots n\}$ of workers and a set $J = \{1, 2, \ldots, n\}$ of jobs/firms. For uniformity, we have assumed that there is an equal number of workers and firms, though the model can be easily enhanced and made more general. Our goal is to find an integral perfect matching between workers and firms. In this setting, it is natural to assume that each side has a utility function over the other side, making this a two-sided matching market \citep[cf.][for relevant studies]{ashlagi2022assortment, shi2022optimal}. 

As before, we will relax the problem to finding a fractional perfect matching, $\xx$, followed by rounding as described above. We will explicitly define only the simplest case of two-sided markets; more general models follow along the same lines as one-sided markets. 

Under the {\em linear Fisher Nash bargaining two-sided matching market}, abbreviated {\em 2LF}, the utility accrued by agent $i \in A$ under allocation $\xx$,
\[ u_i(\xx) = \sum\nolimits_{j \in J} {u_{ij} x_{ij} }, \]
where $u_{ij}$ is the utility accrued by $i$ if she were assigned job $j$ integrally. Analogously,
the utility accrued by job $j \in J$ under allocation $\xx$,
\[  w_j(\xx) = \sum\nolimits_{i \in A} {w_{ij} x_{ij} }, \]
where $w_{ij}$ is the utility accrued by $j$ if it were assigned to $i$ integrally. 

In keeping with the axiom of symmetry under Nash bargaining, we will posit that the desires of agents and jobs are equally important and we are led to defining the feasible set in a $2n$ dimensional space, i.e., $\CN \subseteq \R_+^{2n}$. The first $n$ components of feasible point $\vv \in \CN$ represent the utilities derived by the $n$ agents, i.e., $u_i(\xx)$s, and the last $n$ components the utilities derived by the $n$ jobs, i.e., $w_j(\xx)$, under a fractional perfect matching $\xx$. Under {\em 2LF}, the disagreement point is the origin, and we seek the Nash bargaining point. A convex program of {\em 2LF} is given in (\ref{eq.2CP}).

\section{Convex Programs for the Models}
\label{sec.CPs}


\subsection{Primal Formulations}\label{sec.CPs-primal}

We start by presenting convex programs for {\em 1LF} and {\em 1LAD}, namely \eqref{eq.CP-LF} and \eqref{eq.CP-LAD}. These differ only in that the latter has the parameters $c_i$ in the objective function.
%
%
\begin{maxi}
	{} {\sum\nolimits_{i \in A}  {\log \left(\sum\nolimits_{j\in G} {u_{ij} x_{ij}}\right)}}
	{\label{eq.CP-LF}}
	{\text{[1LF]}\quad}
	\addConstraint{\xx}{\in \calX}
\end{maxi}
\begin{maxi}
	{} {\sum\nolimits_{i \in A}  {\log \left(\sum\nolimits_{j\in G} {u_{ij} x_{ij}} - c_i\right)}}
	{\label{eq.CP-LAD}}
	{\text{[1LAD]}\quad}
	\addConstraint{\xx}{\in \calX}
\end{maxi}


Program \eqref{eq.CP-SPLC} is a convex program for {\em 1SAD}.
\begin{maxi}
	{} {\sum\nolimits_{i \in A}  {\log \left(\sum\nolimits_{j\in G} {f_{ij}} - c_i\right)}}
	{\label{eq.CP-SPLC}}
	{\text{[1SAD]}\quad}
    \addConstraint{f_{ij}}{\le u_{ijk} x_{ij} + b_{ijk} \quad }{\forall i \in A, j\in G, k\in \range{\kappa_{ij}}}
	\addConstraint{\xx}{\in \calX}
\end{maxi}


Program \eqref{eq.CP-NPLC} is a convex program for {\em 1NAD}.
\begin{maxi}
    {} {\sum\nolimits_{i \in A}  {\log (v_i - c_i)}}
        {\label{eq.CP-NPLC}}
    {\text{[1NAD]}\quad}
    \addConstraint{v_i}{\leq \sum\nolimits_{j\in G} {a_{ijk} x_{ij} + b_{ik}} \quad }{\forall i \in A, k \in \range{\kappa_i}}
    \addConstraint{\xx}{\in \calX}
\end{maxi}


Program \eqref{eq.CP-2G} is a convex program for {\em 1LF2G}. Note that the feasible region in \eqref{eq.CP-2G} corresponds to a 3-dimensional matching polytope, or more generally, a multi-marginal optimal transport polytope \citep[cf.][]{peyre2019computational}.
\begin{maxi}
    {} {\sum_{i \in A}  {\log \left(\sum_{j\in G_1}\sum_{l\in G_2} {u_{ijl} x_{ijl}}\right)}}
        {\label{eq.CP-2G}}
    {\text{[1LF2G]}\quad}
    \addConstraint{\sum_{j\in G_1}\sum_{l\in G_2} {x_{ijl}}}{=1 \quad}{\forall i \in A}
    \addConstraint{\sum_{i\in A}\sum_{l\in G_2} {x_{ijl}}}{=1 }{\forall j \in G_1}
    \addConstraint{\sum_{i\in A}\sum_{j\in G_1} {x_{ijl}}}{=1 }{\forall l \in G_2}
    \addConstraint{x_{ijl}}{\geq 0}{\forall i \in A, j \in G_1, l \in G_2}
\end{maxi}

Program \eqref{eq.2CP} is a convex program for {\em 2LF}.
\begin{maxi}
    {} {\sum\nolimits_{i \in A}  \log \left(\sum\nolimits_{j\in J} {u_{ij} x_{ij}}\right) \ + \ \sum\nolimits_{j \in J} {\log \left(\sum\nolimits_{i\in A} {w_{ij} x_{ij}}\right)}}
        {\label{eq.2CP}}
    {\text{[2LF]}\quad}
    \addConstraint{\xx}{\in \calX}
\end{maxi}

\subsection{Optimality Bounds}\label{sec.CPs-bounding}
We now describe lower bounding schemes for the optimal utilities using duality theory, which unveil and extend interesting properties of the Nash bargaining based matching markets. In particular, we generalize and improve the equal share matching fairness property of the matching markets with linear utilities, a weakened version of envy-freeness, which states that in a Nash solution, each agent gets a utility which is at least as good as half the utility that she would accrue with goods equally shared among all agents. From a computational perspective, we shall also use these bounds to improve the performance of our solution methods detailed in Section~\ref{sec.SM}. 

In the following, with a slight abuse of notation, we will use $u_i(\yy)$ for $\yy\in \R^n_+$, to denote the utility of agent $i$ for a solution that has $\yy$ as its $i^{\text{th}}$ row.

\subsubsection{One-sided matching markets without disagreement points.}
We first provide lower bounds on the optimal utilities in the one-sided matching markets with general concave utility functions when the disagreement point is the origin.

\begin{lemma}\label{lemma:utility_bound-feas}
    Let $\xx^*$ be an optimal solution to $\max\limits_{\xx \in \calX}\; \sum_{i\in A}\log(u_i(\xx))$, 
    where $u_i(\xx)$ is a concave utility function with $u_i(\vecz)=0$.
    For any $\bar{\xx}\in \R^{n^2}_+$ (not necessarily a perfect matching), we have
    \begin{align}
        u_i(\xx^*)\ge \frac{1}{\sum_{j\in G}\bar{x}_{ij}+n\max_{j\in G}\{\bar{x}_{ij}\}} u_i(\bar{\xx}). \label{eq:concave-util-bound-feas}
    \end{align}
\end{lemma}

Note that Lemma \ref{lemma:utility_bound-feas} does not require $\bar{\xx}$ to be a perfect matching,
thus we can produce the tightest bound for each agent $i$ by maximizing the lower bound in \eqref{eq:concave-util-bound-feas} for $i$ independently of other agents. As a prelude, let us consider the bounds based on highest achievable utilities.

\begin{corollary}\label{corollary:utility_bound-max}
Let $u_i^{\text{max}} = \max_{\yy\in \Delta_n}\{u_i(\yy)\}$ be the highest utility achievable by agent $i$ independently of other agents, where $\Delta_n$ is the unit simplex in $\R^n$. Then
\begin{align}
   \frac{1}{1+n}u_i^{\text{max}}\le u_i(\xx^*) \le u_i^{\text{max}}. \label{eq:concave-util-bound-feas-max}
\end{align}
\end{corollary}
Let us now consider $\bar{\xx}=\theta\hat{\xx}$ for $\theta\in \R_+$ and $\hat{\xx}\in \R^{n^2}_+$ fixed. The bound in \eqref{eq:concave-util-bound-feas} becomes
\begin{align}
    u_i(\xx^*)\ge \frac{1}{\sum_{j\in G}\hat{x}_{ij}+n\max_{j\in G}\{\hat{x}_{ij}\}} \frac{u_i(\theta\hat{\xx})}{\theta}\label{eq:concave-util-bound-feas-theta}
\end{align}
Concavity of $u_{i}$ implies that $\frac{u_i(\theta\hat{\xx})}{\theta}$ increases as $\theta\to 0$ thus \eqref{eq:concave-util-bound-feas-theta} becomes a tighter bound. 
For (piecewise) linear $u_i$ and for sufficiently small $\theta$, $u_i(\theta\hat{\xx})$ is homogeneous of degree 1, hence $u_i(\theta\hat{\xx})=\theta \bar{u}_i(\hat{\xx})$, where $\bar{u}_i$ denotes $u_i$ when only the homogeneous pieces of $u_i$ are considered. 
Given $\sum_{j}\hat{x}_{ij}+n\max_{j}\{\hat{x}_{ij}\}=2n$ for 
$\hat{\xx}=\ee$, where $\ee$ is the vector of all ones, we derive the following result.
\begin{corollary}\label{corollary:utility_bound-equal}
     Let $\bar{u}_i(\xx)$ denote $u_i$ when only the homogeneous pieces of $u_i$ are considered. Then
    \begin{align}
         u_i(\xx^*)\ge \frac{1}{2n}\bar{u}_i(\ee) \label{eq:concave-util-bound-equal}
    \end{align}
\end{corollary}
Clearly, $\bar{u}_i(\xx)=\sum_{j}u_{ij}x_{ij}$ for 1LF, and $\bar{u}_i(\xx)=\sum_{j}u_{ij1}x_{ij}$ for 1SAD, where $u_{ij1}$ is the utility rate in the first piece of the utility function of $j$ for $i$. For 1NAD, assuming that the first $\bar{\kappa}_i$ pieces of $u_{i}$ are homogeneous (i.e., $b_{ik}=0$ for $k\in\range{\bar{\kappa}_i}$), we have $\bar{u}_i(\xx)=\min_{k\in\range{\bar{\kappa}_i}}\sum_{j}u_{ijk}x_{ij}$.

\begin{remark}
    The bound in \eqref{eq:concave-util-bound-equal} is the equal share bound for linear utilities, and is at least as good as the equal share bound for non-linear (e.g., piecewise linear) utilities since $\bar{u}_i(\ee)\ge u_i(\ee)$.
\end{remark}

We can produce even tighter bounds by taking the limits along arbitrary directions. For 1LF, given that $u_i$ is homogeneous of degree 1, we can directly maximize \eqref{eq:concave-util-bound-feas} and obtain
\begin{align}
    \max_{\yy\in \R^n_+} \left\{u_i(\yy): \sum_{j\in G}y_j+n\max_{j\in G}\{y_j\}= 1\right\}
    = \max_{\theta > 0} \max_{\yy\in [0, \theta]^n} \left\{u_i(\yy): \sum_{j\in G}y_j= 1-n\theta\right\}.\label{eq:utility-bound-1lf-knapsack}
\end{align}
Problem \eqref{eq:utility-bound-1lf-knapsack} is essentially a continuous knapsack problem parameterized by $\theta$. 
As given in Proposition \ref{prop:lowerbound-1lf-kp}, we may solve this problem by sorting the goods in the non-decreasing order of their utilities for agent $i$ and adding one good at a time. Consequently, the resulting lower bound generalizes both \eqref{eq:concave-util-bound-feas-max} and \eqref{eq:concave-util-bound-equal} and can be obtained in $\calO(n\log n)$ for each $i$. Figure~\ref{fig:1lf-bound} illustrates the quality of this bound compared to the equal share matching bound \eqref{eq:concave-util-bound-equal} and max-utility bound \eqref{eq:concave-util-bound-feas-max}, in which $\underline{u}_i^{(m)}=\frac{1}{n+m}\sum_{j\in G_m}u_{ij}$ is as defined below and $\underline{u}_i^*=\max\{\underline{u}_i^{(m)}\}$. As illustrated in Figure~\ref{fig:1lf-bound}, it is not difficult to verify that \eqref{eq:utility-bound-1lf-knapsack} is a piecewise linear concave function of $\theta$, with $n$ breakpoints at $\theta_m=\frac{1}{n+m}$ for $m\in\range{n}$, hence Proposition \ref{prop:lowerbound-1lf-kp} yields the optimal lower bound in \eqref{eq:concave-util-bound-feas}.

\begin{figure*}[t]
    \centering
    \includegraphics[clip,width=0.35\textwidth]{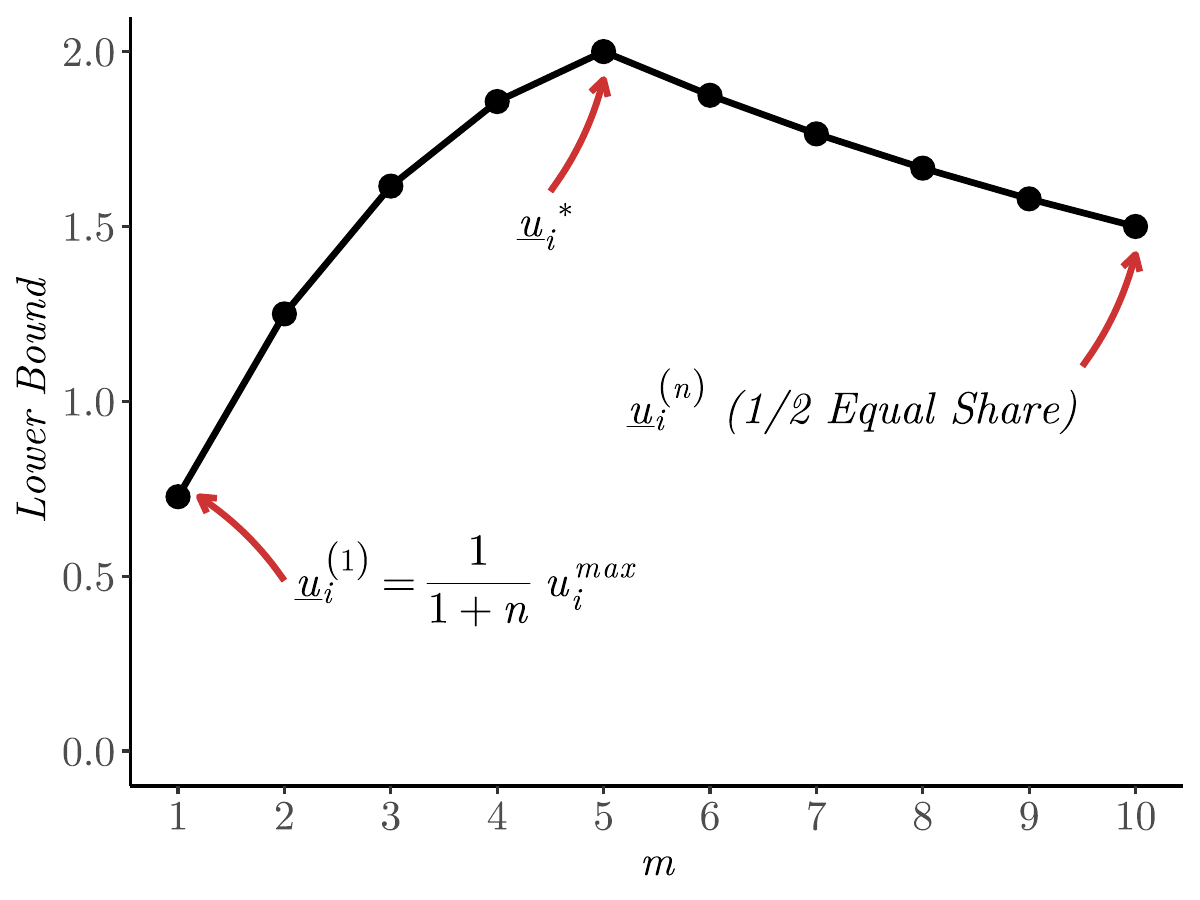}
    \caption{Lower bounds on $u_i(\xx^*)$ in an instance of 1LF with $n=10$ sorted utilities $(8, 7, 6, 5, 4, 0, 0, 0, 0, 0)$.}\label{fig:1lf-bound}
\end{figure*}

\begin{proposition}\label{prop:lowerbound-1lf-kp}
Let $G_m$ denote the indices of the top $m$ utilities in $\{u_{ij}\}_{j\in G}$. Then
\begin{align*}
    u_i(\xx^*)\ge \max_{m\in \range{n}} \left\{\frac{1}{n+m}\sum\nolimits_{j\in G_m}u_{ij}\right\}.
\end{align*}
\end{proposition}

For 1SAD, given that $u_{i}$ is separable piecewise linear, once $y_{j}\le l_{ij1}$ for each $j$, $u_{i}$ becomes linear with $u_{i}(\yy)=\sum_{j\in G}u_{ij1}y_j$. Thus we can maximize \eqref{eq:concave-util-bound-feas} as in the linear case.
\begin{corollary}\label{corollary:lowerbound-1sad-max}
Let $v_{ij}=u_{ij1}$ be the utility rate of agent $i$ in the first piece of $u_{i}$ for good $j$ in 1SAD, and $G_m$ be the indices of the top $m$ utility rates in $\{v_{ij}\}_{j\in G}$. Then $u_i(\xx^*)\ge \max\limits_{m\in \range{n}} \left\{\frac{1}{n+m}\sum_{j\in G_m}v_{ij}\right\}$.
\end{corollary}

For 1NAD, again assuming that the first $\bar{\kappa}_i$ pieces of $u_{i}$ are homogeneous (i.e., $b_{ik}=0$ for $k\in\range{\bar{\kappa}_i}$), for sufficiently small $\yy$ we have $u_i(\yy)=\min_{k\in\range{\bar{\kappa}_i}}\sum_{j}a_{ijk}y_j$, and \eqref{eq:utility-bound-1lf-knapsack} becomes a parameterized \textit{max-min knapsack problem} \citep{brown1979knapsack,yu1996max}. 
By LP duality, there exists $\llambda\in\Delta_{\bar{\kappa}_i}$ such that $u_i(\yy)=\sum_{k\in\range{\bar{\kappa}_i}}\lambda_k\sum_{j}a_{ijk}y_j$ in an optimal solution to \eqref{eq:utility-bound-1lf-knapsack}. For simplicity, we use $\lambda_k=1/\bar{\kappa}_i$, i.e. the average utility rate of $(i,j)$ among the homogeneous piece of $u_{i}$. 


\begin{corollary}\label{corollary:lowerbound-1nad-kp}
In 1NAD, let the first $\bar{\kappa}_i$ pieces of $u_{i}$ be homogeneous.
Let $v_{ij}=\frac{1}{\bar{\kappa}_i}\sum\nolimits_{k\in\range{\bar{\kappa}_{i}}}\{a_{ijk}\}$, $G_m$ denote the indices of the top $m$ values in $\{v_{ij}\}_{j\in G}$, and $\bar{m}=\argmax\limits_{m\in \range{n}} \left\{\frac{1}{n+m}\sum_{j\in G_m}v_{ij}\right\}$. Then 
\begin{align*}
    u_i(\xx^*)\ge \frac{1}{n+\bar{m}}\min_{k\in\range{\bar{\kappa}_i}}\left\{ \sum\nolimits_{j\in G_{\bar{m}}}a_{ijk}\right\}.
\end{align*}
\end{corollary}

\subsubsection{One-sided matching markets with disagreement points.}
The following Lemma extends Lemma~\ref{lemma:utility_bound-feas} to one-sided matching markets with concave utilities and non-zero disagreement points. Note that, unlike Lemma~\ref{lemma:utility_bound-feas}, we now require $\hat{\xx}$ to be a perfect matching.

\begin{lemma}\label{lemma:utility_bound-feas-c}
    Let $\xx^*$ be an optimal solution to $\max_{\xx \in \calX}\; \sum_{i\in A}\log(u_i(\xx)-c_i)$.
    For any $\hat{\xx}\in \calX$ we have
    \begin{align}
        u_i(\xx^*)-c_i \ge \frac{1}{1+\frac{n}{1-\hat{\sigma}\rho}\max_{j}\{\hat{x}_{ij}\}} (u_i(\hat{\xx})-c_i). \label{eq:concave-util-bound-feas-c}
    \end{align}
    where $\hat{\sigma}= \max\limits_{i\in A}\left\{\frac{c_i}{u_i(\hat{\xx})}\right\}$ is a disagreement ratio for $\hat{\xx}$ and  $\rho = \max\limits_{i\in A, j\in G}\left\{\frac{g^{\max}_{ij}}{g^{\min}_{ij}}\right\}$, in which $g^{\max}_{ij}$ and $g^{\min}_{ij}$ are the highest and lowest utility rates for agent $i$ from good $j$ in $u_{i}$, respectively.
\end{lemma}

\begin{remark}
    For $\cb=\vecz$, the bound in \eqref{eq:concave-util-bound-feas-c} is the same as \eqref{eq:concave-util-bound-feas} since $\hat{\sigma}=0$ when $\cb=\vecz$. Additionally, $\rho=1$ for linear utilities, thus \eqref{eq:concave-util-bound-feas-c} recovers the equal share bound for 1LF by setting $\hat{\xx}=\frac{1}{n}\ee$.
\end{remark}

\subsubsection{One-sided matching markets over multiple goods.}\label{sec:bound-1lf2g}
The following result extends Lemma \ref{lemma:utility_bound-feas} to the one-sided matching markets over $m$ goods (without disagreement points).
\begin{lemma}\label{lemma:utility_bound-multiple}
    Let $\xx^*$ be an optimal solution to a one-sided matching market over $m$ goods with $u_i(\xx)$ a concave utility function such that $u_i(\vecz)=0$. Then
    \begin{align}
        u_i(\xx^*)\ge \frac{1}{2n^{m}}\max_{\theta >0}\left\{\frac{u_i(\theta\ee)}{\theta}\right\}, \label{eq:concave-util-bound-multiple}
    \end{align}
    where $\ee$ is the vector of all ones in $\R^{n^{m+1}}$. (Note: $\theta\ee$ need not be feasible.)
\end{lemma}

\begin{corollary}\label{colorally:bound_linear-multiple}
    For 1LF2G, bound \eqref{eq:concave-util-bound-multiple} is the equal share matching fairness bound:
    \begin{align*}
        u_i(\xx^*)\ge \frac{1}{2n^2}\sum_{j\in G_1}\sum_{l\in G_2}u_{ijl}.
    \end{align*}
\end{corollary}

\subsubsection{Two-sided matching markets with linear utilities.}
We now describe lower bounds for two-sided matching markets with linear utilities $u_i(\xx)=\sum\limits_{j\in J}u_{ij}x_{ij}$ and $w_j(\xx)=\sum\limits_{i\in A}w_{ij}x_{ij}$. For brevity we describe the bounds for $u_i(\cdot)$; bounds for $w_j(\cdot)$ follow by symmetry. 

\begin{lemma}\label{lemma:bound-2lf}
Let $\xx^*\in \calX$ be an optimal solution to $\max\limits_{\xx \in \calX}\; \sum\nolimits_{i\in A}\log(u_i(\xx))+\sum\nolimits_{j\in J}\log(w_j(\xx))$, and define $\bar{w}_{ij}=\frac{w_{ij}}{\max\limits_{i'\in A}\{w_{i'j}\}}$ and $\bbar{w}_{i}=\min\{n, \max\limits_{j\in J, i'\in A}\{\frac{w_{ij}}{w_{i'j}}\}\}$. 
 Then, for any $\bar{\xx}\in\R^{n^2}_+$ and $i\in A$:

\begin{align}
    u_i(\xx^*)\ge \max\left\{\frac{u_i(\bar{\xx})}{\sum\nolimits_{j\in J}\bar{x}_{ij}(2n-\bar{w}_{ij})},\frac{u_i(\bar{\xx})}{\sum\nolimits_{j\in J}\bar{x}_{ij}(1+\bbar{w}_i-\bar{w}_{ij})+2n\max\nolimits_{j\in J}\{\bar{x}_{ij}\}}\right\}. \label{eq:bound-2lf-u-2}
\end{align}

\end{lemma}
\begin{remark}
    The constants $\bar{w}_{ij}\le 1\le \bbar{w}_i$ capture variations in the utility rates of jobs, with more uniform utilities driving both $\bar{w}_{ij}$ and $\bbar{w}_i$ towards 1. Indeed, when utilities for jobs are completely uniform (i.e., $\bbar{w}_i=1$ and $\bar{w}_{ij}=1$ for each $j$), \eqref{eq:bound-2lf-u-2} recovers the bound \eqref{eq:concave-util-bound-feas} for 1LF with $2n\max\limits_{j\in J}\{\bar{x}_{ij}\}$ in the denominator in place of $n\max\limits_{j\in J}\{\bar{x}_{ij}\}$.
\end{remark}
\begin{remark}
    With $\bar{\xx}=\ee$, \eqref{eq:bound-2lf-u-2} yields a bound slightly worse than $2/3$ the equal share bound:
    \begin{align*}
        u_i(\xx^*)\ge \frac{\sum_{j\in J}u_{ij}}{3n+\sum_{j\in J}(\bbar{w}_i-\bar{w}_{ij})}.
    \end{align*}
\end{remark}

Analogous to one-sided models, we can derive the tightest bound on $u_i(\xx^*)$ by maximizing the right-hand-side values in \eqref{eq:bound-2lf-u-2} independently for each $i$. With the first bound in \eqref{eq:bound-2lf-u-2} we obtain
\begin{align*}
    u_i(\xx^*)\ge & \max_{\bar{\xx}\in\R^{n^2}_+}\left\{\frac{u_i(\bar{\xx})}{\sum\nolimits_{j\in J}\bar{x}_{ij}(2n-\bar{w}_{ij})}\right\} = \max_{j\in J}\left\{\frac{u_{ij}}{2n-\bar{w}_{ij}}\right\}.
\end{align*}
Similar to the one-sided models, since $u_{i}$ is homogeneous, producing the tightest bound based on the second bound in \eqref{eq:bound-2lf-u-2} amounts to solving a parametric knapsack problem:
\begin{align}
  \max_{\theta > 0} \max_{\yy\in [0,\theta]^n}\left\{\sum\nolimits_{j\in J} u_{ij} y_{j}: \sum\nolimits_{j\in J} (1+\bbar{w}_i-\bar{w}_{ij})y_j= 1-2n\theta\right\}. \label{eq:bound-2lf-parametric-kp}
\end{align}
Therefore, using Dantzig's procedure, we can solve the knapsack problems in \eqref{eq:bound-2lf-parametric-kp} by picking one $j$ at a time in the non-decreasing order of adjusted utility rates $v_{ij}=\frac{u_{ij}}{1+\bbar{w}_i-\bar{w}_{ij}}$. In fact, similar to Proposition~\ref{prop:lowerbound-1lf-kp}, we can solve \eqref{eq:bound-2lf-parametric-kp} in $\calO(n\log n)$ by simply sorting the adjusted utilities once.


\begin{proposition}\label{prop:lowerbound-2lf-kp}
Let $J_m$ denote the indices of the top $m$ values in $\{\frac{u_{ij}}{1+\bbar{w}_i-\bar{w}_{ij}}\}_{j\in J}$. Then
\begin{align*}
    u_i(\xx^*)\ge \max_{m\in \range{n}} \left\{\frac{1}{2n+\sum_{j\in J_m}(1+\bbar{w}_i-\bar{w}_{ij})}\sum\nolimits_{j\in J_m}u_{ij}\right\}.
\end{align*}
\end{proposition}
\section{Solution Methods}\label{sec.SM}
We present two solution methods for solving instances of the convex programs given in Section~\ref{sec.CPs}: (a) Frank-Wolfe algorithm, and (b) Cutting-Plane algorithm. Both algorithms rely on linear approximations of these problems and converge to the optimal solution in polynomial time.
For simplicity of exposition, we focus on the simpler models {\em 1LAD} (and {\em 1LF}) and {\em 2LF} to describe the algorithms. We will explain how these algorithms can be extended to other models.

\subsection{Frank-Wolfe Algorithm}\label{sec:fw}
Frank-Wolfe (FW) method \citep{frank1956algorithm, jaggi2013revisiting} is one of the simplest and earliest known iterative algorithms for solving non-linear convex optimization problems of the form 
\begin{align}
    \max_{\xx\in \calX}f(\xx),\label{eq:basic}
\end{align}
where $f$ is a concave function and $\calX$ is a compact convex set.
The underlying principle in Frank-Wolfe method is to replace the non-linear objective function $f$ with its linear approximation $\tilde{f}(\xx)=f(\xx^{(0)})+\nabla f(\xx^{(0)})^{\top}(\xx-\xx^{(0)})$ at a trial point $\xx^{(0)}\in \calX$, and solve a simpler problem $\max_{\xx\in \calX}\tilde{f}(\xx)$
to produce an ``atom'' solution $\hat{\xx}$. The algorithm then iterates by performing line search between $\xx^{(0)}$ and $\hat{\xx}$ to produce the next trial point $\xx^{(1)}$ as a convex combination of $\xx^{(0)}$ and $\hat{\xx}$.

\subsubsection{Linear utilities.}
Algorithm~\ref{pseudo-code-frank-wolfe} presents an implementation of the FW algorithm for solving instances of {\em 1LAD}, in which the objective function $f$ is defined as
$f(\xx)=\sum_{i\in A}\log(\sum_{j\in G}u_{ij}x_{ij}-c_i)$ and the feasible region is the matching polytope defined in Definition~\ref{def.pm}. 

\paragraph{Producing an atom.} FW is particularly useful when one can solve the subproblems in a combinatorial fashion. Given that $\calX$ corresponds to a matching polytope, at each iteration of Algorithm~\ref{pseudo-code-frank-wolfe}, the \textit{atom} is an integral perfect matching produced by solving a maximum weight matching problem, which can be done in $\calO(n^3)$ using, e.g., the Auction algorithm \citep{bertsekas1988auction}.
The optimal solution produced by FW is therefore a convex combination of these integral perfect matchings.

\begin{algorithm}[t!]
	Set $t\leftarrow 0$ and find a feasible perfect matching $\xx^{(0)}$.
 
	\While{not converged}{
		Compute $g_{ij}=\frac{\partial}{\partial x_{ij}} f(\xx^{(t)})=\frac{u_{ij}}{v^{(t)}_i-c_i}$, where $v^{(t)}_i=\sum_{j\in G}u_{ij}x^{(t)}_{ij}$
  
		Compute $\hat{\xx}^{(t)}$ by solving the following maximum weight matching problem:
        \begin{align*}
            \hat{\xx}^{(t)}=\arg\max_{\xx\in \calX} \; \sum\nolimits_{i\in A}\sum\nolimits_{j\in G}g_{ij}x_{ij}
        \end{align*}
		
		Set the step-size $\tau^{(t)}=\frac{2}{t+2}$, or compute $\tau^{(t)}$ using the following line search 
		\begin{align*}
		    \tau^{(t)}=\arg\max_{\gamma \in [0,1]} \; f\left((1-\gamma)\xx^{(t)}+\gamma \hat{\xx}^{(t)}\right)
		\end{align*}
		
		Update $\xx^{(t+1)}=(1-\tau^{(t)})\xx^{(t)}+\tau^{(t)} \hat{\xx}^{(t)}$; $t\leftarrow t+1$
	}
	\caption{Frank-Wolfe algorithm for solving {\em 1LAD}}
	\label{pseudo-code-frank-wolfe}
\end{algorithm}

\paragraph{Step-size.}
For linear utilities, given that $v_i$ is a linear function of $\xx$, instead of performing line search step in Algorithm~\ref{pseudo-code-frank-wolfe} in the space of $\xx$, we can perform line search in the space of utilities: 
\begin{align}
	\tau^{(t)}=\arg\max\limits_{\tau \in [0,1]}\; \sum_{i\in A} \log\left(\tau (\hat{v}^{(t)}_i-v^{(t)}_i)+v^{(t)}_i-c_i\right),\label{eq:line-search}
\end{align}
where $\hat{v}^{(t)}_i=\sum_{j\in G}u_{ij}\hat{x}^{(t)}_{ij}$, which can be solved efficiently using a first-order algorithm such as Newton's method. Consequently, given the polynomial form of the derivatives in \eqref{eq:line-search}, $\tau^{(t)}$ will be  a rational value provided the search is started with a rational value. By induction, we note the following property of FW, which is particularly interesting when  the algorithm terminates in few iterations, yielding a sparse rational convex combination of integral perfect matchings.

\begin{remark}
	\label{ramark.fw-rational}
	Provided $\xx^{(0)}$ is an integral perfect matching, the iterate $\xx^{(t+1)}$ at the end of iteration $t$ of the FW Algorithm~\ref{pseudo-code-cutting-plane} is a rational convex combination of $t+2$ integral perfect matchings.
\end{remark}

\paragraph{Convergence.} 
The standard implementation of the FW algorithm assumes a sublinear convergence rate when gradient of $f$ is Lipschitz continuous \citep{jaggi2013revisiting}; that is, $\calO(\frac{1}{\epsilon})$ iterations are required to achieve an $\epsilon$-optimal solution. Since the logarithmic form of $f$ lacks gradient Lipschitz continuity, recovering the $\calO(\frac{1}{\epsilon})$ rate warrants a more careful treatment \citep[cf.][]{zhao2022analysis}. However, 
we can recover gradient Lipschitz continuity by assuming a lower bound on the optimal utilities (Section~\ref{sec.CPs-bounding}), and thus avoid unboundedness in the gradients as remarked below.
\begin{remark}\label{remark:log_lipschitz}
    Given $v_0>0$, $\log_+(v|v_0)$ defined below has $\frac{1}{v_0^2}$-Lipschitz continuous gradient.
    \begin{align}
    \log_+(v|v_0) = \begin{cases}
        \log(v_0)-1+\frac{v}{v_0} & \text{if } v \le v_0\\
        \log(v) & \text{if } v \ge v_0
    \end{cases}\label{eq:lipschitz-log}
\end{align}
\end{remark}

We also note that, by concavity of $f$,  $\sum_{i,j}g_{ij}(\hat{x}^{(t)}_{ij}-x^{(t)}_{ij})$ provides an upper bound on the optimality gap of iterate $x^{(t)}$ at each iteration $t$ of Algorithm~\ref{pseudo-code-frank-wolfe} \citep{jaggi2013revisiting}. Thus we may numerically assess convergence of Algorithm~\ref{pseudo-code-frank-wolfe} using
\begin{align}
	\text{Gap}=\frac{\sum_{i,j}g_{ij}(\hat{x}^{(t)}_{ij}-x^{(t)}_{ij})}{|f(\xx^{(t)})|}.\label{eq:gap-fw}
\end{align}

\paragraph{Extension to other models.}
The FW Algorithm~\ref{pseudo-code-frank-wolfe} easily extends to {\em 2LF}, with the only difference that the gradient is now 
$$g_{ij}=\frac{\partial}{\partial x_{ij}} f(\xx^{(t)})=\frac{u_{ij}}{v^{(t)}_i}+\frac{w_{ij}}{v^{(t)}_j}.$$
For {\em 1SAD} and {\em 1NAD}, given the non-smooth objective functions, the FW algorithm is not guaranteed to converge to the optimal solution \citep[cf.][for a counter example]{nesterov2018complexity}. Hence, we introduce smooth counterparts for these utility functions in Section~\ref{sec:smooth-fw}. Finally, given that the feasible region of {\em 1LF2G} does not conform to a standard matching polytope, we provide a tailored algorithm in Section~\ref{sec:1lf2g-fw} for the general matching market models over multiple goods.

\subsubsection{Extension to 1NAD and 1SAD.}\label{sec:smooth-fw}
We now introduce smooth counterparts for the piecewise linear utility functions in {\em 1SAD} and {\em 1NAD} based on Moreau envelopes of these functions. 

Given a convex function $\varphi$, the Moreau envelope of $\varphi$ with parameter $\gamma>0$ is defined as 
\begin{align}
    e_{\gamma;\varphi}(\xx) = \inf_{\yy}\left\{\varphi(\yy)+\frac{1}{2\gamma}\|\xx-\yy\|^2\right\}.   \label{eq:Moreau-env}
\end{align}
The unique optimal solutions to \eqref{eq:Moreau-env} is denoted $\prx_{\gamma;\varphi}(\xx)$ and called the \textit{proximal operator}. Remarkably, $e_{\gamma;\varphi}$ is smooth and assumes the same set of minimizers as $\varphi$, and $\nabla e_{\gamma;\varphi}$ given by \eqref{eq:Moreau-env-gradient} is $\frac{1}{\gamma}$-Lipschitz continuous for convex $\varphi$ \citep{parikh2014proximal}.
\begin{align}
    \nabla e_{\gamma;\varphi}(\xx) = \frac{1}{\gamma}\left(\xx-\prx_{\gamma;\varphi}(\xx)\right). \label{eq:Moreau-env-gradient}
\end{align}

\begin{definition}
    For a non-smooth concave function $u$, we denote by $u_{\gamma}(\cdot)=-e_{\gamma;-u}(\cdot)$ the smooth concave envelope of $u$.
\end{definition}

Figure \ref{fig:concave_envelope} illustrates the smooth concave envelopes a piecewise linear function for different choices of $\gamma$. Next, we show how to derive the smooth concave envelope of utility functions in 1NAD.

\begin{figure}[t!]
    \centering
    \includegraphics[width = 0.4\textwidth]{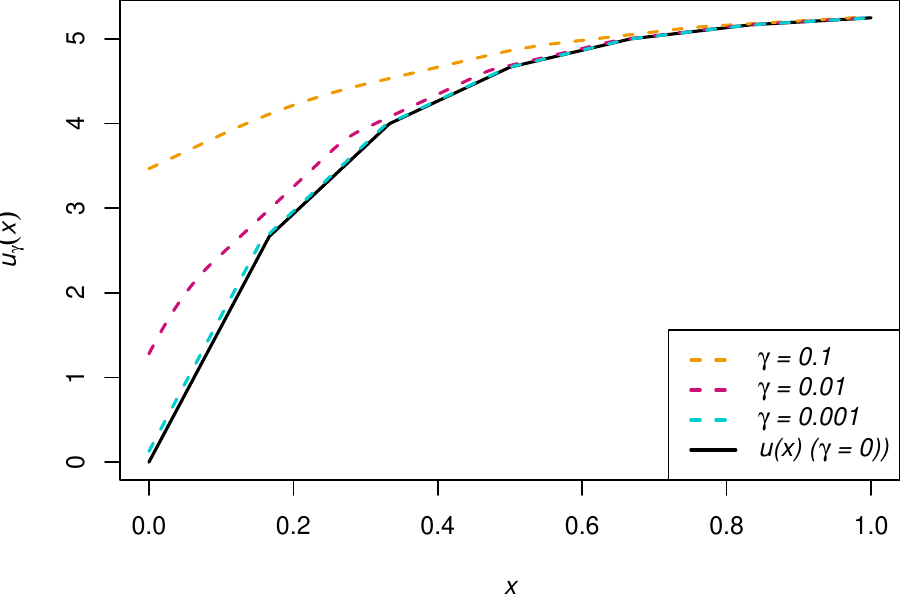}
    \caption{Smooth concave envelopes of a piecewise linear concave utility function.}
    \label{fig:concave_envelope}
\end{figure}

%

\begin{proposition}\label{prop:smooth-nonseparable}
    Let $u_i(\xx_i)=\min_{k\in \range{\kappa_i}}\left\{\sum_{j\in G}a_{ijk} x_{ij} + b_{ik}\right\}$ be a non-separable piecewise linear concave utility function and $u_{i\gamma}(\xx_i)=-e_{\gamma;-u_i}(\xx_i)$ its smooth concave envelope. Then
    \begin{align}
        u_{i\gamma}(\xx_i) = & \min_{\zz\in \Delta_{\kappa_i}}\left\{\frac{\gamma}{2}\zz^{\top}Q_i\zz+\xx_i^{\top}A_i \zz\right\}, \label{eq:Moreau-env-1nad}\\
        \nabla u_{i\gamma}(\xx) = & A_i\zz^*, \label{eq:Moreau-env-1nad-grad}
    \end{align}
    where $\Delta_{\kappa_i}=\{\zz\in \R_+^{\kappa_i}: \sum_{k\in \range{\kappa_i}}z_k=1\}$ is the unit simplex, $A_i$ is an $n\times \kappa_i$ matrix with entries $A_{ijk}=a_{ijk}+b_{ik}$, $Q_i=A_i^{\top}A_i$, and $\zz^*\in \Delta_{\kappa_i}$ is the optimal solution to \eqref{eq:Moreau-env-1nad}.
\end{proposition}
 
\begin{remark}
    The matrix $Q_i$ is positive semi definite, does not depend on $\xx$, and needs to be computed only once for each $i$ in the course of the FW algorithm.
\end{remark}

Consequently, the smooth concave envelope $u_{i\gamma}(\xx_i)$ and its gradient can be obtained as solution to a convex quadratic program over a unit simplex in $\R^{\kappa_i}$, which can be solved extremely fast, using e.g., exponentiated gradient descent or proximal gradient method.


The case of separable utility functions in 1SAD exhibits an even simpler structure, since we can characterize the break points of the one-dimensional piecewise linear functions explicitly. 
Given that the $k^{\text{th}}$ segment of the utility function $f_{ij}$ is of length $l_{ijk}$, the $k^{\text{th}}$ break point of $f_{ij}$ is $\hat{x}_{ijk}=\sum_{h=1}^{k}{l_{ijh}}$. 
By concavity of $f_{ij}$ we have $u_{ij,k-1}\ge u_{ijk}$, and, by definition of the $\hat{x}_{ijk}$ and $b_{ijk}$ values, we have $u_{ijk}\hat{x}_{ij,k-1} + b_{ijk} = u_{ij,k-1}\hat{x}_{ij,k-1} + b_{ij,k-1}$  for each $k>1$.
Thus we can restate $f_{ij}$ as
\begin{align*}
    f_{ij}(x) = \begin{cases}
        u_{ij1}x + b_{ij1} & \text{if } x\in (-\infty, \hat{x}_{ij1}]\\
        u_{ij2}x + b_{ij2} & \text{if } x\in [\hat{x}_{ij1},\hat{x}_{ij2}]\\
        \dots & \dots \\
        u_{ij,\kappa_{ij}}x+ b_{ij,\kappa_{ij}} & \text{if } x\in [\hat{x}_{ij,\kappa_{ij}-1}, \infty)
    \end{cases}
\end{align*}

The following result shows that $f_{ij}$ admits a closed-form concave envelope, in which $ij$ is dropped for simplicity.
%
Note that $f'_{\gamma}(x)$ can be computed by identifying the interval which $x$ belongs to.

\begin{proposition}\label{prop:smooth-separable}
    Let $f$ be a separable concave piecewise linear utility function with $\kappa$ segments and $\kappa-1$ breakpoints $\hat{x}_{1}<\hat{x}_{2}<\dots<\hat{x}_{k-1}$. 
    Then the smooth concave envelope of $f$ is
    \begin{align*}
        f_{\gamma}(x) = \begin{cases}
            u_1 x + b_1 + \frac{\gamma}{2}u_1^2 & \text{if } x\in (-\infty, \hat{x}_1-\gamma u_1]\\
            u_1 \hat{x}_1 + b_1 - \frac{1}{2\gamma}(x-\hat{x}_1)^2 & \text{if } x\in [\hat{x}_1-\gamma u_1,  \hat{x}_1-\gamma u_2]\\
            u_2 x + b_2 + \frac{\gamma}{2}u_2^2 & \text{if } x\in [\hat{x}_1-\gamma u_2, \hat{x}_2-\gamma u_2]\\
            u_2 \hat{x}_2 + b_2 - \frac{1}{2\gamma}(x-\hat{x}_2)^2 & \text{if } x\in [\hat{x}_2-\gamma u_2,  \hat{x}_2-\gamma u_3]\\
            \dots & \dots\\
            u_{\kappa-1} \hat{x}_{\kappa-1} + b_{\kappa-1} - \frac{1}{2\gamma}(x-\hat{x}_{\kappa-1})^2 & \text{if } x\in [\hat{x}_{\kappa-1}-\gamma u_{\kappa-1}, \hat{x}_{\kappa-1}-\gamma u_{\kappa}]\\
            u_{\kappa} x + b_{\kappa} + \frac{\gamma}{2}u_{\kappa}^2 & \text{if } x\in [\hat{x}_{\kappa-1}-\gamma u_{\kappa}, \infty)\\
        \end{cases}
    \end{align*}
\end{proposition}

\subsubsection{Extension to multiple goods.}\label{sec:1lf2g-fw}
Matching markets over $m\ge 2$ goods are the most challenging of our Nash-bargaining-based models due to the $\calO(n^{m+1})$ number of variables involved in these models. In addition, as the feasible region no longer conforms to a matching polytope or any other network-flow polytope, atom solutions cannot be obtained in a combinatorial fashion, curtailing scalability of FW Algorithm~\ref{pseudo-code-frank-wolfe}.
We address this issue through a generalization of the FW algorithm and considering a slightly different convex program instead of \eqref{eq.CP-2G}.
For generality, we describe the algorithm for $m$ goods which covers 1LF2G ($m=2$), 1LF ($m=1$), and beyond. 


Let $\calX^m\subset \R^{n^{m+1}}$ be the $(m+1)$-dimensional matching polytope (e.g., $\calX^1$ is the matching polytope \eqref{eq:matching_polytope} while $\calX^2$ is the set of feasible allocations in 1LF2G).
Let $u_i(\xx)$ be the total utility of agent $i$ from allocation $\xx$ (e.g., $u_i(\xx)=\sum_{j_1\in G_1}\sum_{j_2\in G_2} u_{i,j_1,j_2}x_{i,j_1,j_2}$ in 1LF2G), and $\underline{u}_{i}$ be a lowerbound on the optimal value of $u_i(\xx)$ as described in Section~\ref{sec:bound-1lf2g}. We can replace the objective function with
$f(\xx)=\sum_{i\in A}\log_+(u_i(\xx)|\underline{u}_i)$ which has Lipschitz continuous gradient, and write \eqref{eq.CP-2G} as 
\begin{align}
    \max_{\xx\in \calX^m}f(\xx).\label{eq:basic-1lf2g}
\end{align}
Now, consider the composite optimization problem
\begin{align}
    \max_{\xx\in \calX^m}f(\xx)+\zeta h(\xx),\label{eq:basic-entropic}
\end{align}
where $h(\xx)=-\sum_{i}\sum_{j_1,\dots,j_m}x_{i,j_1,\dots,j_m}(\log(x_{i,j_1,\dots,j_m})-1)$ is the entropy function and $\zeta$ is a sufficiently small positive value. Note that as $\zeta\to 0$, optimal solution of \eqref{eq:basic-entropic} converges to the optimal solution of \eqref{eq:basic-1lf2g}. Moreover, given that $h$ is strongly concave, FW algorithm for solving \eqref{eq:basic-entropic} can achieve a linear convergence rate through away steps \citep{lacoste2015global}. However, this algorithm would still require solving a huge linear program at each iteration $t$:
\begin{align}
    \max_{\xx\in \calX^m}\nabla_{\xx}f(\xx^{(t)})+\zeta \nabla_{\xx}h(\xx^{(t)}).\label{eq:multimarginal-gradient}
\end{align}
The key observation to be made is that the regularized multimarginal matching problem 
\begin{align}
    \max_{\xx\in \calX^m}\nabla_{\xx}f(\xx^{(t)})+\zeta h(\xx)\label{eq:multimarginal-gradient-entropic}
\end{align}
can be solved more efficiently than the linear program \eqref{eq:multimarginal-gradient} thanks to adaptation of Sinkhorn-Knopp's matrix scaling algorithm to multiple margins \citep{peyre2019computational,lin2022complexity}. This property lends itself well to the generalized Frank-Wolfe algorithm for convex programs with composite objective functions, in which one approximates only one of the functions with its linear approximation \citep{nesterov2018complexity}. We describe the generalized FW algorithm for solving instances of 1LF2G in Algorithm \ref{pseudo-code-entropic-frank-wolfe}. With this implementation, not only can we produce an atom more efficiently, but also, as shown by \cite{nesterov2018complexity}, with the \textit{aggressive} choice of step size $\tau_t =  \frac{6(t + 1)}{(t + 2)(2t + 3)}$, we get a linear convergence rate thanks to strong concavity of $h$ and Lipschitz smoothness of $f$.

\begin{algorithm}[h]
	Set $t=0$, and the initial $(m+1)$-dimensional matching $\xx^{(0)}\in \calX^m$ with $x^{(0)}_{i,j_1,\dots,j_m}=\frac{1}{n^m}$.
 
	\While{not converged}{
		Compute the gradient tensor $\gb^{(t)}=\nabla_{\xx} f(\xx^{(t)})$
  
		Find atom $\hat{\xx}^{(t)}$ by solving the regularized $(m+1)$-dimensional  matching problem using Algorithm~\ref{pseudo-code-sinkhorn}:
		\[
		\hat{\xx}^{(t)}=\arg\max_{\xx\in \calX^m} \; \xx^{\top}{\gb}^{(t)} + \zeta h(\xx)
		\]

		Update $\xx^{(t+1)}=(1-\tau^{(t)})\xx^{(t)}+\tau^{(t)} \hat{\xx}^{(t)}$, where $\tau_t =  \frac{6(t + 1)}{(t + 2)(2t + 3)}$, and set $t\leftarrow t+1$
	}
	\caption{Generalized Frank-Wolfe algorithm for matching markets over $m$ goods}
	\label{pseudo-code-entropic-frank-wolfe}
\end{algorithm}

To solve the regularized multimarginal matching problem \eqref{eq:multimarginal-gradient-entropic}, we adapt the Sinkhorn-Knopp's matrix scaling to multiple marginals as described in \cite{lin2022complexity}. For the sake of self-containedness, we describe this procedure in Algorithm~\ref{pseudo-code-sinkhorn} in Appendix~\ref{app:entropic_multimatching}.

\begin{theorem}
    For a fixed value $\zeta$, 
    Algorithm \ref{pseudo-code-entropic-frank-wolfe} converges linearly in the number of iterations to an optimal solution of \eqref{eq:basic-entropic}. Furthermore, each iteration of Algorithm \ref{pseudo-code-entropic-frank-wolfe} takes $\calO\left(\frac{mn^{m+1}\|\gb\|^2_{\infty}}{4\log(n)\zeta^2}\right)$ arithmetic operations.
\end{theorem}



\subsection{Cutting-Plane Algorithm} 
The underlying principle in the cutting-plane method for solving convex programs with nonlinear objective function is to outer-approximate the epigraph of the objective function through a series of linear programs \citep{kelley1960cutting,Vishnoi.book}. Let $f(\vv)=\sum_{i\in A} \log(v_i-c_i)$ be the objective function in {\em 1LAD} where $v_i=\sum_{j\in G}u_{ij}x_{ij}$. Since $f$ is concave in $\vv$, for a given solution $\hat{\vv}$ we have:
\begin{align}
	f(\vv)\le f(\hat{\vv})+ \nabla f(\hat{\vv})^{\top} (\vv-\hat{\vv})=f(\hat{\vv})-n+\sum\nolimits_{i\in A}\frac{v_i-c_i}{\hat{v}_i-c_i}\label{eq:cp-cut}
\end{align}
Therefore, we can rewrite {\em 1LAD} as the following semi-infinite linear program (SILP):


	\begin{maxi}
		{} {\eta}
		{\label{eq.silp}}
		{}
		\addConstraint{\eta\leq f(\hat{\vv}) + \nabla f(\hat{\vv})^{\top} (\vv-\hat{\vv})}{\quad}{\forall \hat{\vv} \in\mathcal{N}}
		\addConstraint{(\xx,\vv)\in \mathcal{S}}{,}{}
	\end{maxi}
	where $\mathcal{N}$ is the set of vectors $\hat{\vv}$ such that $\hat{v}_i>c_i$, and $\mathcal{S}$ is the set of feasible assignments.
%
Observe that replacing $\mathcal{N}$ with $\hat{\mathcal{N}}\subset \mathcal{N}$ in \eqref{eq.silp} yields an LP which is a relaxation of the SILP \eqref{eq.silp}.
A natural way of solving SILP \eqref{eq.silp} is to start with a manageable subset $\hat{\mathcal{N}}$ and grow this set until the upper bound produced by the LP is sufficiently close to the optimal value \citep{kelley1960cutting}. 
Instead of cutting off the corner points of the hypograph approximations, it is customary to solve modified forms of these LPs and cut off interior points. Given $\underline{f}$, a lower bound on the optimal value of $f$,
%
we may construct a cutting plane through the center of the hypograph approximation by solving
\begin{maxi}
	{} {\sigma}
	{\label{eq.central_silp}}
	{}
	\addConstraint{\eta-\sigma}{\geq \underline{f}}{}
	\addConstraint{\eta}{\leq f(\hat{\vv})+ \nabla f(\hat{\vv})^{\top} (\vv-\hat{\vv})-\sigma \|(1,\nabla f(\hat{\vv}))\|_2\quad}{\forall \hat{\vv} \in\hat{\mathcal{N}}}
	\addConstraint{(\xx,\vv)}{\in \mathcal{S},}{}
\end{maxi}
which yields radius $\sigma$ and center $(\vv,\xx,\eta)$ of the largest ball that can be inscribed inside the hypograph approximation \citep{nemhauser1971modified,elzinga1975central}.

Algorithm~\ref{pseudo-code-cutting-plane} describes the proposed \textit{Central Cutting-Plane} (CP) algorithm for solving instances of {\em 1LAD}.
As the algorithm iterates, we improve the lower bound $\underline{f}$ and add new cuts to tighten the hypograph approximation. Consequently, the inscribed ball shrinks (i.e., $\{\sigma^{(t)}\}_{t=0}^{\infty}\to 0$), and $\{(\vv^{(t)},\xx^{(t)})\}_{t=0}^{\infty}$ converges to the optimal solution \citep{elzinga1975central}. 
%
We use the optimality gap in \eqref{eq:gap-cp} and terminate the algorithm once this gap falls below a given optimality gap threshold.
\begin{align}
	\text{Gap}=\frac{\sigma^{(t)}}{|\eta^{(t)}|}.\label{eq:gap-cp}
\end{align}


\begin{algorithm}[t]
		Find an initial solution $(\vv^{(0)},\xx^{(0)})$
  
		Initialize $\hat{\mathcal{N}}\leftarrow\{\vv^{(0)}\}$; $\underline{f} \leftarrow f(\vv^{(0)})$;  $t\leftarrow 1$
  
		$(\vv^*,\xx^*)\leftarrow (\vv^{(0)},\xx^{(0)})$
  
		\While{not converged}{
			Solve LP \eqref{eq.central_silp} to obtain the center $(\vv^{(t)},\xx^{(t)},\eta^{(t)})$ and radius $\sigma^{(t)}$\;
			$\hat{\mathcal{N}}\leftarrow \hat{ \mathcal{N}}\cup\{\vv^{(t)}\}$
   
			\If {$\underline{f}<f(\vv^{(t)})$}{
				$\underline{f}\leftarrow f(\vv^{(t)})$; $(\vv^*,\xx^*)\leftarrow (\vv^{(t)},\xx^{(t)})$}
			$t\leftarrow t+1$
		}
	\caption{Central cutting-plane algorithm for solving {\em 1LAD}}
	\label{pseudo-code-cutting-plane}
\end{algorithm}
\subsubsection{Enhancement techniques.} 

We now make several remarks on enhancing CP. 
\paragraph{Pre-processing.} 
In general, we have a decision variable $x_{ij}$ for each agent $i$ and good $j$. Given that contribution of $x_{ij}$'s with $u_{ij}=0$ to total utility is zero, we can eliminate $x_{ij}$ when $u_{ij}=0$. To recover feasibility, we replace the assignment constrains with the following constraints:
\[ \forall i \in A: \ \sum\nolimits_{j\in G} {x_{ij}} \le 1 \ \ \ \text{and} \ \ \ \forall j \in G: \ \sum\nolimits_{i\in A} {x_{ij}} \le 1 . \]
This simple observation eliminates a large number of variables making the LPs more easily solvable.

In addition, given $f(\vv)=\sum_{i\in A} \log(v_i-c_i)$, extracting a cut  at iteration $t$ of CP  requires $v^{(t)}_i - c_i > 0$ for each agent $i$ (note that cut coefficients are $1/(v^{(t)}_i - c_i)$). However, since CP is an outer-approximation algorithm, it is possible to get $v^{(t)}_i - c_i \le 0$ for some agent $i$, making the solution $\vv^{(t)}$ unbounded, preventing cut extraction from this solution. 
To resolve this issue, we add a bound constraint $v_i\ge \underline{v}_i$, where $\underline{v}_i > c_i$ is an optimality lower bound obtained as described in Section~\ref{sec.CPs-bounding}.

\paragraph{Cut generation.}
Cutting-plane algorithms are notorious for ``zig-zagging'' from one point in the feasible region to another. To stabilize the algorithm,
akin to \cite{ben2007acceleration}, we employ a trust region strategy, where,
instead of cutting off the current solution $\vv^{(t)}$, we cut off an intermediate point $\tilde{\vv}=\tilde{\alpha} \vv^{(t)}+(1-\tilde{\alpha})\vv^*$, where $\vv^*$ is the current incumbent solution. We select $\tilde{\alpha}\in(0,1]$ such that (i) $f(\tilde{\vv})$ is potentially a better lower bound than $f(\vv^{(t)})$, and (ii)  the cut associated with $\tilde{\vv}$ cuts off $(\vv^{(t)},\eta^{(t)})$, that is $\eta^{(t)} > f(\tilde{\vv})-n+\sum_{i\in A}\frac{v^{(t)}_i-c_i}{\tilde{v}_i-c_i}$. Thus we compute 
$$\tilde{\alpha}=\arg\max_{\alpha\in [0,1]} f(\alpha \vv^{(t)} + (1-\alpha)\vv^*).$$
If necessary, we keep increasing $\tilde{\alpha}$ until the cut produced using $(\tilde{\vv},\tilde{\xx})$ cuts off current solution $(\vv^{(t)},\eta^{(t)})$  or until $\tilde{\alpha}=1$.
Thus, we produce more effective cuts and improve the lower bound quickly.



\paragraph{Scaling of $\eta$.} An LP solver using floating point arithmetic might not handle unbalanced cuts properly.
For a given solution $\hat{\vv}$, coefficient of $\eta$ in a cut of the form \eqref{eq.silp} is 1, while the coefficients of the $\vv$-variables are $(\frac{1}{\hat{v}_i-c_i})_{i\in A}$, which can be much larger than 1 depending on the value of $\hat{\vv}$. For instance, when entries of the utility matrices are at most 1 and $c_i>0$, then $\hat{v}_i-c_i<1$, and it is possible that $\frac{1}{\hat{v}_i-c_i} \gg 1$ for some agents, making the cut coefficients unbalanced. 

To balance the cuts, we first scale all utilities so that they are at most 1. Note that the Nash solution is immune to this transformation (Theorem~\ref{thm.nash}). Then, we replace $\eta$ with $\eta=\theta\gamma$, where $\theta>0$ is a fixed scalar and $\gamma$ acts as the new variable in place of $\eta$. With this change of variable, coefficient of $\sigma$ in the cuts becomes $\|(\theta,\nabla f(\hat{\vv}))\|_2$. In our implementation, we choose $\theta$ as the average coefficient of the $\vv$-variables in the first cut produced, that is $\theta=\frac{1}{n}\sum_{i\in A}\frac{1}{\hat{v}_i-c_i}$. While we may dynamically change $\theta$, we use the same initial $\theta$ for stabilizing all subsequent cuts.

\paragraph{Reoptimization.}
At each iteration of Algorithm~\ref{pseudo-code-cutting-plane}, we add a single constraint of the form \eqref{eq:cp-cut} to the current LP approximation of {\em 1LAD}. Instead of solving these LPs from scratch, we reuse the information obtained in the previous iteration (e.g. the basis) using the Dual Simplex algorithm.

\subsubsection{Extension to other models.}
Given the same functional form in the objective functions and the linear constraints in {\em 2LF}, {\em 1SAD}, {\em 1NAD}, and {\em 1LF2G}, Algorithm~\ref{pseudo-code-cutting-plane} and the acceleration techniques previously detailed for {\em 1LAD} extend to these models easily by replacing the objective function and the constraints with the appropriate function and constraints, respectively. For instance, for {\em 2LF} we have the same set of constraints but the cutting planes take the form of 
$$\eta\le \sum\nolimits_{i\in A}\log(\hat{v}_i)+\sum\nolimits_{j\in G}\log(\hat{v}_j)-2n+\sum\nolimits_{i\in A}\frac{v_i}{\hat{v}_i}+\sum\nolimits_{j\in G}\frac{v_j}{\hat{v}_j}.$$
Note that, in {\em 2LF}, we may eliminate the $x_{ij}$ variables when both $u_{ij}$ and $w_{ij}$ are zero. Similarly, we may eliminate $x_{ijl}$ for triple $(i,j,l)$ when $u_{ijl}=0$ in {\em 1LF2G}. For {\em 1SAD} (respectively {\em 1NAD}) we can eliminate $x_{ij}$ when $u_{ijk}=0$ (respectively $a_{ijk}=0$) for all pieces of the utility functions.

\subsection{Initial Perfect Matching}\label{sec:initial_matching}
A high quality initial solution is key to fast convergence of both CP and FW algorithms. Note that $\xx_i=\sum_{j\in G}x_{ij}\ee_j$, where $\ee_j$ is the $j^{\text{th}}$ basic vector in $\R^n$. Concavity of $\log(u_i(\xx)-c_i)$ together with $\sum_{j\in G}x_{ij}=1$ imply that for any $\xx\in \calX$ 
\begin{align}
    \log(u_i(\xx)-c_i) \ge \sum\nolimits_{j\in G} x_{ij}\log(u_i(\ee_j)-c_i) \qquad \forall i\in A,
\end{align}
with equality when $x_{ij}\in \{0,1\}$. Therefore, the linear program 
$$\max\limits_{\xx\in \calX}\quad\sum\nolimits_{i\in A}\sum\nolimits_{j\in G} x_{ij}\log(u_i(\ee_j)-c_i)$$
effectively produces a lower bound and a near optimal solution to $\max\limits_{\xx\in \calX}\sum\nolimits_{i\in A}\log(u_i(\xx)-c_i)$ by pushing the objective value from below.
To avoid unboundedness, we define $\log_M(\alpha)=\log(\alpha)$ for $\alpha>0$ and $\log_M(\alpha)=-M$ for $\alpha\le 0$, in which $M$ is a sufficiently large constant. With this definition, we solve the following maximum-weight matching problem to produce a near optimal solution:
\begin{align}
	\max_{\xx\in\calX} \quad \sum\nolimits_{i\in A}\sum\nolimits_{j\in G}x_{ij}\log_M(u_i(\ee_j)-c_i).\label{op:initial}
\end{align}
%
%

Similarly, for {\em 2LF}, we use the surrogate linear function $\sum_{i\in A}\sum_{j\in G}x_{ij}(\log_M(u_{ij})+\log_M(w_{ij}))$, since $w_j(\xx)$ is also a convex combination of $w_{1j}, w_{2j},\dots,w_{nj}$ with weights $x_{1j},x_{2j},\dots,x_{nj}$. 


We end this section by highlighting the quality and importance of the  solution obtained by \eqref{op:initial}. 
\begin{proposition}\label{prop:initial_bound}
    Provided the matching market model admits an integral optimal solution,  then (i) the lowerbound provided by \eqref{op:initial} is tight, and (ii) both FW and CP algorithms terminate after one iteration when initialized by the solution produced by \eqref{op:initial}.
\end{proposition}


\section{Computational Results}\label{sec.Exp}
We conducted computational experiments on instances of various difficulty levels to assess the scalability of the proposed algorithms for each matching market model.
All experiments were conducted on a Dell desktop equipped with Intel(R) Xeon(R) CPU E5-2680 v3 at 2.50GHz with 8 Cores and 32 GB of memory. 
We coded our algorithms in \texttt{C\#} integrated with th \texttt{ILOG Concert} library for solving the LPs in Algorithm \ref{pseudo-code-cutting-plane} using the Dual Simplex method implemented in \texttt{CPLEX 12.10} solver (\texttt{RootAlgorithm} parameter set to \texttt{Cplex.Algorithm.Dual}). 
We implemented the Auction algorithm \citep{bertsekas1988auction} for solving the perfect matching subproblems in FW. While this algorithm runs in $\calO(n^3)$ in the worst case; we found it to be very efficient in solving extremely large matching problems. 
We terminated the FW and CP algorithms upon reaching either an optimality gap of 0.01\%, running time of 3600 seconds, or 10000 iterations.

\subsection{Computational Results for {\em 1LAD}, {\em 1LF} and {\em 2LF}}
We start by presenting the results for matching market models with linear utility functions.
We performed computational experiments on {\em 1LAD}, {\em 1LF} and {\em 2LF} by producing random utility matrices $u$ (and $w$ in {\em 2LF}) as follows. 
We control the density (i.e. number of positive entries) of the utility matrices using parameter $\rho\in\{\frac{1}{20},\frac{1}{3},\frac{2}{3}\}$. We considered two types of utility matrices:  (\textit{a}) \textbf{binary}, where $u_{ij}$ ($w_{ij})$ was set to 1 with probability $\rho$, and  (\textit{b}) \textbf{nonbinary}, where $u_{ij}$ ($w_{ij})$ was set to a positive value with probability $\rho$ uniformly drawn from $\{1,2,\dots,20\}$.
For {\em 1LAD}, we chose $c_i$ uniformly from the set $\{\frac{\bar{u}}{3},\frac{\bar{u}}{4},0\}$, where $\bar{u}=\frac{1}{4}\max_{ij}\{u_{ij}\}$ to ensure feasibility. Tables~\ref{tab:results-1lad}, \ref{tab:results-1lf} and \ref{tab:results-2lf} provide detailed results for FW and CP in terms of time to produce the initial solution as described in Section~\ref{sec:initial_matching}, optimality gap of the initial solution, and total computation times. 

Average computation times of FW and CP for solving instances of {\em 1LAD}, {\em 1LF} and {\em 2LF} are illustrated in Figure~\ref{fig:res_linear}. As observed, FW is able to solve all instances, even for {\em 2LF} and instances with extremely large number of agents/goods, in less than 4 minutes, thanks to its capacity for exploiting the combinatorial structures of these problems. While CP is also competitive, particularly for smaller instances, its performance heavily relies on efficiency of the LP solver both in terms of running time and memory. This is particularly evident for larger instances where computation time grows an order of magnitude faster compared to FW and the solver runs out of memory for $n\ge 5000$. 
It is also interesting to note that FW exhibits similar performance in solving instances with both binary and nonbinary utility matrices, whereas binary utility matrices tend to be harder than nonbinary utility matrices for CP. The optimality gaps reported for the initial solutions in Tables~\ref{tab:results-1lad}, \ref{tab:results-1lf} and \ref{tab:results-2lf} also show that the initial solutions that we produce are near optimal which highlight the importance and effectiveness of the procedure introduced in Section~\ref{sec:initial_matching}.

As explained in Section~\ref{sec:fw}, FW also has the advantage of producing rational solutions that are sparse convex combinations of some integral perfect matchings. 
Figure~\ref{fig:example} in Appendix~\ref{app:results} illustrates an example where the solutions produced by CP and FW are numerically optimal (resp. -1.0939295 and -1.0939293), but the solution produced by FW is the true rational optimal solution.

\begin{figure}[h]
    \centering
    \includegraphics[width=0.9\textwidth]{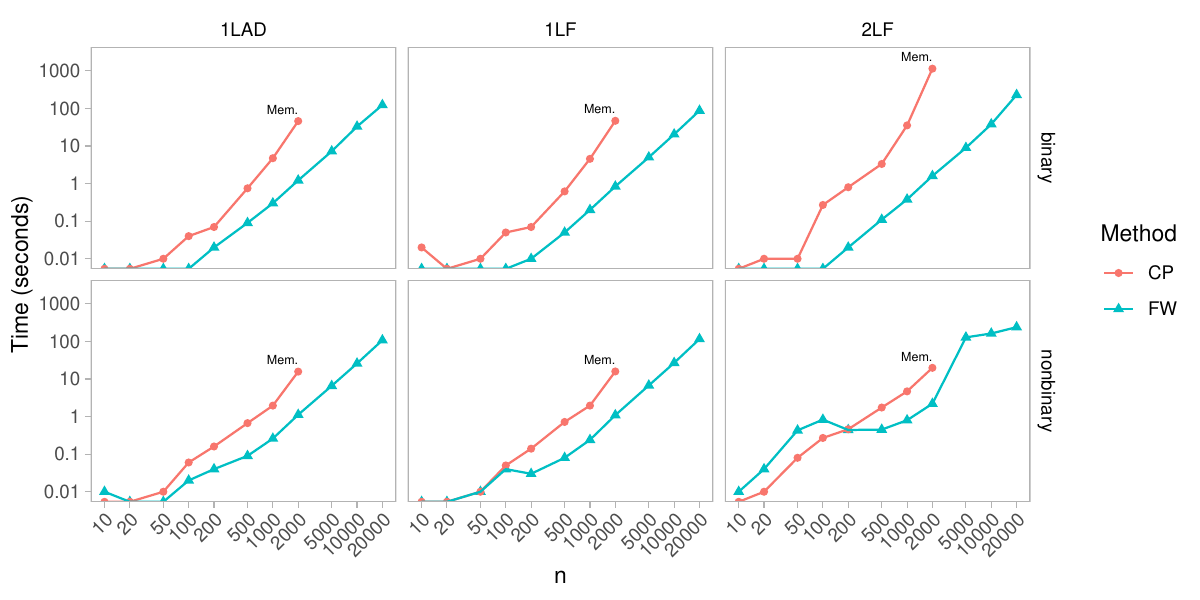}
    \caption{Computation times for {\em 1LAD}, {\em 1LF} and {\em 2LF}. Both axes are log-scaled.}
    \label{fig:res_linear}
\end{figure}

\subsection{Computational Results for {\em 1SAD} and {\em 1NAD}}
We generated random instance for {\em 1SAD} by constructing piecewise linear concave utility functions each with $K\in\{5,10,20,50\}$ segments of equal length. To generate non-decreasing utility rates $u_{ij1}>u_{ij2}>\dots>u_{ijK}$ for each $(i,j)$, we first generated $K$ values $\nu_{k}$ randomly drawn from $\{1,\dots,20\}$, and then set $u_{ijk}=\sum_{l\in \range{K}}\nu_{l}$. 
We then scaled the $u_{ijk}$ values such that the area under the utility function becomes $\frac{1}{2}\tilde{u}$ for some $\tilde{u}$ uniformly drawn from $\{1,\dots,20\}$.

For {\em 1NAD}, we considered $K\in\{5,10,20,50\}$ hyperplanes of the form $\sum_{j\in G}a_{ijk}x_{ij}+b_{ik}$ for each $i\in A$, and generated the coefficients $a_{ijk}$ by multiplying $\frac{2}{3}$ with a value uniformly drawn from the set $\{0,1,\dots,20\}$, and generated the intercept $b_{ik}$ by multiplying $\frac{1}{3}$ with a value uniformly drawn from the set $\{0,1,\dots,20\}$. If $b_{ik}>0$ for all $k$, then we randomly set one of them to 0.

Figure~\ref{fig:res_sad_nad} illustrates the computation times for {\em 1SAD} and {\em 1NAD} across different choices of $n$ and $K$ using the CP and FW algorithms. 
Interestingly, instances of {\em 1NAD} are in general more challenging than {\em 1SAD}, for both CP and FW. For CP, this is because the constraints in the master problems have a sparser structure in {\em 1SAD}. For FW, this is because the Moreau envelopes can be computed analytically for {\em 1SAD} as opposed to solving quadratic programs for {\em 1NAD}. 

CP turns out to be quite efficient for small and moderately sized instances, but FW starts to outperform CP as $n$ and $K$ become larger. As in the previous experiments, this is because the LP solver runs out of memory or hits the time limit when the master problem contains too many variables/constraints. Conversely, FW is less impacted by $K$, thanks to independency of the matching problems to the number of segments in the utility functions.


\begin{figure}[h]
    \centering
    \includegraphics[width=0.9\textwidth]{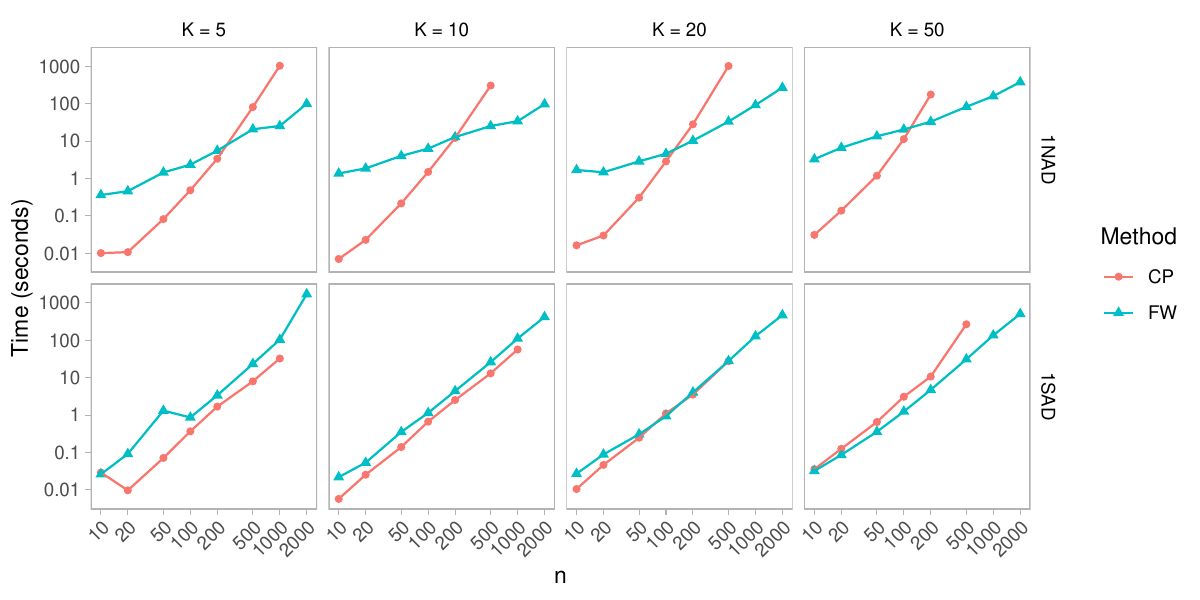}
    \caption{Computation times for {\em 1SAD} and {\em 1NAD}. Both axes are log-scaled.}
    \label{fig:res_sad_nad}
    \vspace{0.25cm}
    \centering
    \includegraphics[width=0.7\textwidth]{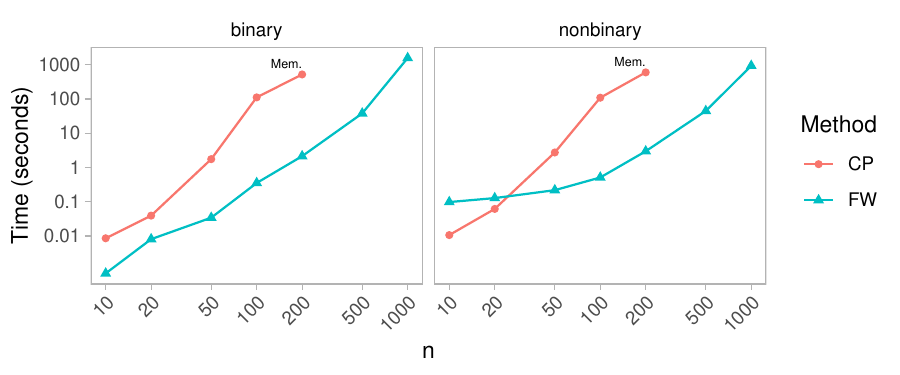}
    \caption{Computation times for {\em 1LF2G}. Both axes are log-scaled.}
    \label{fig:res_1lf2g}
\end{figure}

\subsection{Computational Results for {\em 1LF2G}}
Figure~\ref{fig:res_1lf2g} illustrates the performance of generalized FW (Algorithm~\ref{pseudo-code-entropic-frank-wolfe}) and CP (Algorithm~\ref{pseudo-code-cutting-plane}) on solving instances of {\em 1LF2G}. The instances are generated as in {\em 1LF} considering binary and nonbinary utilities. Given the large number of variables ($n^3$) involved in this model, CP reaches its limit as soon as the number of agents surpasses $n=200$. However, our specialized multimarginal matching routine allows us to solve instances with as many as $n=1000$ agents in less than 30 minutes using our generalized FW algorithm, which further highlights efficacy of our implementation. 

\section{Conclusions}\label{sec:conclusion}
This paper is motivated by three factors: first, the revival, in the last few years, of the area of matching-based market design; second, the recent realization that the most prominent solution that uses cardinal utilities, namely the Hylland-Zeckhauser (HZ) mechanism \citep{hylland} for a one-sided matching markets, is intractable; and third, the extreme paucity of matching market models that use cardinal utilities.

As a consequence of the revival, e.g., see the upcoming book \cite{Echenique2023online}, it is safe to assume that new markets will be introduced in the economy and will call for new models and efficient mechanisms with good properties. The papers \cite{VY-HZ} and \cite{HZ-hardness} show that the problem of computing even an approximate HZ equilibrium is PPAD-complete. In practice,  computing an HZ equilibrium for linear utilities for even $n = 10$ is prohibitive. The paucity of  models stands in sharp contrast with general equilibrium theory, which has defined and extensively studied several fundamental market models to address a number of specialized and realistic situations. 

Our paper addresses these issues by proposing Nash-bargaining-based matching market models. The Nash bargaining solution can be computed efficiently since it is captured by a convex program. Furthermore, our approach yields a rich collection of models: for one-sided as well as two-sided markets, for Fisher as well as Arrow-Debreu settings, and for a wide range of utility functions, all the way from linear to Leontief. 

We also give very fast implementations for these models using Frank-Wolfe and Cutting Plane algorithms. These help solve large instances with several thousand agents and goods in a matter of minutes on a PC, even for a one-sided matching market under piecewise-linear concave utility functions and a two-sided matching market under linear utility functions. A number of new ideas were needed, beyond the standard methods, to obtain these implementations. In particular, we present several lower bounding schemes, which not only help improve the convergence of our solution methods but also highlight fairness properties of the Nash-bargaining-based models.

\bibliographystyle{informs2014}
\bibliography{refs}

\ACKNOWLEDGMENT{Supported in part by NSF grant CCF-1815901. The second author would like to thank Richard Zeckhauser for a very enlightening discussion on his ``wish list'' of models for matching markets. Several of the models, studied in this paper, have their origins in that discussion. }

\ECSwitch




%


%
%

\begin{APPENDICES}

\section{Proofs of Statements}

\subsection{Proof of Lemma \ref{lemma:utility_bound-feas}}

\begin{repeattheorem}[Lemma \ref{lemma:utility_bound-feas}.]
   Let $\xx^*$ be an optimal solution to $\max\limits_{\xx \in \calX}\; \sum_{i\in A}\log(u_i(\xx))$, 
    where $u_i(\xx)$ is a concave utility function with $u_i(\vecz)=0$.
    For any $\bar{\xx}\in \R^{n^2}_+$ (not necessarily a perfect matching), we have
    \begin{align}
        u_i(\xx^*)\ge \frac{1}{\sum_{j}\bar{x}_{ij}+n\max_{j}\{\bar{x}_{ij}\}} u_i(\bar{\xx}). \label{app-eq:concave-util-bound-feas}
    \end{align}
\end{repeattheorem}
\prf{
Let $\alpha_i\ge 0$ and $\beta_j\ge 0$ be the Lagrangian multipliers of constraints $\sum_{j} x_{ij}\le 1$ and $\sum_{i} x_{ij}\le 1$, respectively. Note that these constraints are tight at optimality since the objective function maximizes positive concave utilities. Furthermore, stationary condition implies that there exists $g_{ij}$, a subderivative of $u_i$ at $\xx^*$, such that
\begin{align}
    \frac{g_{ij}}{u_i(\xx^*)} \le \alpha_i +\beta_j \quad \forall i,j  \label{app-eq:concave-util-bound-stationary}
\end{align}
with equality when $x^*_{ij}>0$. Concavity of $u_i$ implies that 
\begin{align}
    u_i(\bar{\xx})\le &u_i(\xx^*)  + \sum_{j}g_{ij}\bar{x}_{ij} - \sum_{j} g_{ij} x^*_{ij}.\label{app-eq:concave-util-bound-feas-concavity}
\end{align}
In particular, given that $u_i(\vecz)=0$, \eqref{app-eq:concave-util-bound-feas-concavity} implies that $\sum_{j}g_{ij}x^*_{ij}\le u_i(\xx^*)$.
Multiplying both sides of \eqref{app-eq:concave-util-bound-stationary} by $x^*_{ij}$ and summing over $j$ we obtain
\begin{align}
    \frac{\sum_{j}g_{ij}x^*_{ij}}{u_i(\xx^*)} = \sum_{j}x^*_{ij}(\alpha_i +\beta_j) = \alpha_i + \sum_{j}x^*_{ij}\beta_j \le 1  \label{app-eq:concave-util-bound-4}
\end{align}
where we have used $\frac{\sum_{j}g_{ij}x^*_{ij}}{u_i(\xx^*)}\le 1$ and $\sum_{j}x^*_{ij}=1$. In particular, \eqref{app-eq:concave-util-bound-4} implies that $\alpha_i\le 1$ since $\beta_j\ge 0$. Additionally, summing \eqref{app-eq:concave-util-bound-4} over $i$ we get $\sum_{i}\alpha_i + \sum_{j}\beta_j \le n$ since $\sum_{i}x^*_{ij}=1$.
From \eqref{app-eq:concave-util-bound-4} we obtain
\begin{align}
    \sum_{j}g_{ij}x^*_{ij} = u_i(\xx^*)(\alpha_i + \sum_{j}x^*_{ij}\beta_j). \label{app-eq:concave-util-bound-5}
\end{align}
On the other hand, multiplying both sides of \eqref{app-eq:concave-util-bound-stationary} by $u_i(\xx^*)\bar{x}_{ij}$ and summing over $j$ we obtain
\begin{align}
    \sum_j g_{ij} \bar{x}_{ij} \le u_i(\xx^*)(\alpha_i\sum_{j}\bar{x}_{ij} +\sum_j\bar{x}_{ij}\beta_j).  \label{app-eq:concave-util-bound-stationary-x}
\end{align}
Substituting \eqref{app-eq:concave-util-bound-5} and \eqref{app-eq:concave-util-bound-stationary-x} into \eqref{app-eq:concave-util-bound-feas-concavity} we obtain
\begin{align*}
   u_i(\bar{\xx})\le u_i(\xx^*)\left(1+\alpha_i \sum_{j}\bar{x}_{ij} +\sum_j\bar{x}_{ij}\beta_j - \alpha_i-\sum_{j}x^*_{ij}\beta_j\right)\le \left(\sum_{j}\bar{x}_{ij}+n\max_{j}\{\bar{x}_{ij}\}\right) u_i(\xx^*),
\end{align*}
where we have used $1+(\sum_{j}\bar{x}_{ij}-1)\alpha_i\le \sum_{j}\bar{x}_{ij}$ since $\alpha_i\le 1$, and $\sum_j\bar{x}_{ij}\beta_j\le \max_{j}\{\bar{x}_{ij}\}\sum_{j}\beta_j\le \max_{j}\{\bar{x}_{ij}\}n$.
}

\subsection{Proof of Proposition~\ref{prop:lowerbound-1lf-kp}}

\begin{repeattheorem}[Proposition~\ref{prop:lowerbound-1lf-kp}.]
Let $G_m$ denote the indices of the top $m$ utilities in $\{u_{ij}\}_{j\in G}$. Then
\begin{align*}
    u_i(\xx^*)\ge \max_{m\in \range{n}} \left\{\frac{1}{n+m}\sum_{j\in G_m}u_{ij}\right\}.
\end{align*}
\end{repeattheorem}
\prf{The proof follows by evaluating \eqref{eq:concave-util-bound-feas} at $\yy^{(m)}$ for each $m\in \range{n}$, where  $\yy^{(m)}$ is constructed such that $y^{(m)}_j=\frac{1}{m}$ for $j\in G_m$ and $y^{(m)}_j=0$ for $j\ne G_m$.
}

\subsection{Proof of Lemma \ref{lemma:utility_bound-feas-c}}

\begin{repeattheorem}[Lemma \ref{lemma:utility_bound-feas-c}.]
    Let $\xx^*$ be an optimal solution to $\max_{\xx \in \calX}\; \sum_{i\in A}\log(u_i(\xx)-c_i)$.
    For any $\hat{\xx}\in \calX$ we have
    \begin{align*}
        u_i(\xx^*)-c_i \ge \frac{1}{1+\frac{n}{1-\hat{\sigma}\rho}\max_{j}\{\hat{x}_{ij}\}} (u_i(\hat{\xx})-c_i). 
    \end{align*}
    where $\hat{\sigma}= \max\limits_{i\in A}\left\{\frac{c_i}{u_i(\hat{\xx})}\right\}$ is a disagreement ratio for $\hat{\xx}$ and  $\rho = \max\limits_{i\in A, j\in G}\left\{\frac{g^{\max}_{ij}}{g^{\min}_{ij}}\right\}$ is a curvature ratio for the utility functions, in which $g^{\max}_{ij}$ and $g^{\min}_{ij}$ are the highest and lowest utility rates for agent $i$ from good $j$ in $u_{i}$, respectively.
\end{repeattheorem}
\prf{As before, stationary condition implies
\begin{align}
    \frac{g_{ij}}{u_i(\xx^*)-c_i} \le \alpha_i +\beta_j \quad \forall i,j  \label{app-eq:concave-util-bound-feas-c-stationary}
\end{align}
with equality when $x^*_{ij}>0$. 
Multiplying both sides of \eqref{app-eq:concave-util-bound-feas-c-stationary} respectively by $(u_i(\xx^*)-c_i)x^*_{ij}$ and $(u_i(\xx^*)-c_i)\bar{x}_{ij}$ and summing over $j$ we obtain
\begin{align}
     \sum_j g_{ij} x^*_{ij} = (u_i(\xx^*)-c_i)(\alpha_i +\sum_j x^*_{ij}\beta_j),  \label{app-eq:concave-util-bound-feas-c-x1}\\
    \sum_j g_{ij} \bar{x}_{ij} \le (u_i(\xx^*)-c_i)(\alpha_i +\sum_j\bar{x}_{ij}\beta_j),  \label{app-eq:concave-util-bound-feas-c-x2}
\end{align}
where we have used $\sum_{j}x^*_{ij}=\sum_{j}\bar{x}_{ij}=1$.
Concavity of $u_i$ implies that 
\begin{align}
    u_i(\bar{\xx})-c_i\le &u_i(\xx^*)-c_i  + \sum_{j}g_{ij}\bar{x}_{ij} - \sum_{j} g_{ij} x^*_{ij}.\label{app-eq:concave-util-bound-feas-c-concavityx1x2}
\end{align}
Substituting \eqref{app-eq:concave-util-bound-feas-c-x1} and \eqref{app-eq:concave-util-bound-feas-c-x2} into \eqref{app-eq:concave-util-bound-feas-c-concavityx1x2} we obtain
\begin{align}
   u_i(\bar{\xx})-c_i\le & (u_i(\xx^*)-c_i)\left(1+\alpha_i  +\sum_j\bar{x}_{ij}\beta_j - \alpha_i-\sum_{j}x^*_{ij}\beta_j\right)\nonumber \\\le & \left(1+\max_{j}\{\bar{x}_{ij}\}\sum_{j}\beta_j\right) (u_i(\xx^*)-c_i) \label{app-eq:concave-util-bound-feas-c-concavity-bound}
\end{align}
It remains to bound $\sum_{j}\beta_j$, for which we first bound $\sum_{i}\alpha_i+\sum_{j}\beta_j$ using $\hat{\xx}$. 
From $\sum_{j}g_{ij}x^*_{ij}\le u_i(\xx^*)$ we deduce $\frac{\sum_{j}g_{ij}x^*_{ij}}{u_i(\xx^*)-c_i}\le 1+\frac{c_i}{u_i(\xx^*)-c_i}$, hence we obtain
\begin{align*}
    \sum_{i}\alpha_i + \sum_{j}\beta_j= &\sum_{i}(\alpha_i + \sum_{j}x^*_{ij}\beta_j)=\sum_{i}\frac{\sum_{j}g_{ij}x^*_{ij}}{u_i(\xx^*)-c_i}\le n+\sum_{i\in A}\frac{c_i}{u_i(\xx^*)-c_i}.
\end{align*}
To bound $\frac{c_i}{u_i(\xx^*)-c_i}$ we use the curvature parameter $\rho$ and the disagreement bound $\hat{\sigma}$. First, from concavity of $u_i$ together with $u_i(\vecz)=0$ we obtain
\begin{align*}
    u_i(\hat{\xx})\le u_i(\vecz) + \sum_{j\in G} \partial_{ij} u_i(\vecz) \hat{x}_{ij}\le \sum_{j\in G}\hat{x}_{ij}g^{\max}_{ij}.
\end{align*}
Now, from \eqref{app-eq:concave-util-bound-feas-c-stationary} we obtain in order
\begin{align}
    \frac{c_i}{u_i(\xx^*)-c_i} \le & c_i\min_{j\in G}\left\{\frac{\alpha_i+\beta_j}{g_{ij}}\right\} \le \hat{\sigma}u_i(\hat{\xx})\min_{j\in G}\left\{\frac{\alpha_i+\beta_j}{g^{\min}_{ij}}\right\} \le \hat{\sigma}\sum_{j\in G}\hat{x}_{ij}g^{\max}_{ij}\min_{l\in G}\left\{\frac{\alpha_i+\beta_l}{g^{\min}_{il}}\right\} \nonumber \\ \le & \hat{\sigma}\sum_{j\in G}\hat{x}_{ij}\frac{g^{\max}_{ij}}{g^{\min}_{ij}}(\alpha_i+\beta_j)\le \hat{\sigma}\rho\sum_{j\in G}\hat{x}_{ij}(\alpha_i+\beta_j)=\hat{\sigma}\rho(\alpha_i+\sum_{j\in G}\hat{x}_{ij}\beta_j)\label{app-eq:concave-util-bound-feas-c-cu}
\end{align}
where we have used $c_i\le \hat{\sigma}u_{i}(\hat{\xx})$, $g_{ij}\ge g^{\min}_{ij}$, $\frac{g^{\max}_{ij}}{g^{\min}_{ij}}\le \rho$ and $\sum_{j}\hat{x}_{ij}=1$.
Finally, summing \eqref{app-eq:concave-util-bound-feas-c-cu} over $i$ and noting that $\sum_{i} \hat{x}_{ij}=1$ we deduce
\begin{align*}
    \sum_i \alpha_i+\sum_{j}\beta_j\le & n+\sum_{i\in A}\frac{c_i}{u_i(\xx^*)-c_i}\le n+\hat{\sigma}\rho(\sum_{i}\alpha_i+\sum_{j\in G}\beta_j) \quad \Rightarrow \nonumber\\
    \sum_i \alpha_i+\sum_{j}\beta_j\le & \frac{n}{1-\hat{\sigma}\rho} \quad \Rightarrow \quad \sum_{j}\beta_j\le \frac{n}{1-\hat{\sigma}\rho}.
\end{align*}
Plugging $\sum_{j}\beta_j\le \frac{n}{1-\hat{\sigma}\rho}$ into \eqref{app-eq:concave-util-bound-feas-c-concavity-bound} shows the result.
}

\subsection{Proof of Lemma \ref{lemma:utility_bound-multiple}}

\begin{repeattheorem}[Lemma \ref{lemma:utility_bound-multiple}.]
    Let $\xx^*$ be an optimal solution to a one-sided matching market over $m$ goods with $u_i(\xx)$ a concave utility function such that $u_i(\vecz)=0$. Then
    \begin{align*}
        u_i(\xx^*)\ge \frac{1}{2n^{m}}\max_{\theta >0}\left\{\frac{u_i(\theta\ee)}{\theta}\right\}, 
    \end{align*}
    where $\ee$ is the vector of all ones in $\R^{n^{m+1}}$. (Note: $\theta\ee$ need not be feasible.)
\end{repeattheorem}
\prf{For ease of exposition, we prove the statement for $m=2$ goods; the proof for $m>2$ follows similarly. 
Concavity of $u_i$ implies
\begin{align}
   u_i(\theta\ee) \le u_i(\xx^*) + \theta\sum_{jl}g_{ijl} - \sum_{jl} g_{ijl} x^*_{ijl},\label{app-eq:concave-util-bound-multiple-1}
\end{align}
where $g_{ijl}$ is a subderivative of $u_i$ at $\xx^*$.
In particular, for $\theta=0$ we obtain $u_i(\xx^*)\ge \sum_{jl} g_{ijl} x^*_{ijl}$ since $u_i(\vecz)=0$. 
Stationary condition implies
\begin{align}
    \frac{g_{ijl}}{u_i(\xx^*)} \le \alpha_i +\beta_j + \sigma_l \quad \forall i,j,l  \label{app-eq:concave-util-bound-multiple-2}
\end{align}
with equality when $x^*_{ijl}>0$, where $\alpha_i\ge 0$, $\beta_j\ge 0$ and $\sigma_l\ge 0$ are the Lagrangian multipliers of constraints $\sum_{jl} x_{ijl}\le 1$, $\sum_{il} x_{ijl}\le 1$, and $\sum_{ij} x_{ijl}\le 1$ respectively. Note that these constraints are tight at optimality since the objective function maximizes positive concave utilities. 
As before, from \eqref{app-eq:concave-util-bound-multiple-2} we obtain
\begin{align}
    u_i(\xx^*) \left(n^2\alpha_i + n\sum_{j}\beta_j+n\sum_{l}\sigma_l\right) \ge &\sum_{jl}g_{ijl} & \forall i \label{app-eq:concave-util-bound-multiple-3}\\
    u_i(\xx^*) \left(\alpha_i + \sum_{jl}x^*_{ijl}(\beta_j+\sigma_l)\right) = &\sum_{jl}g_{ijl}x^*_{ijl} & \forall i \label{app-eq:concave-util-bound-multiple-4}\\
    \alpha_i + \sum_{jl}x^*_{ijl}(\beta_j+\sigma_l) \le & 1 & \forall i \label{app-eq:concave-util-bound-multiple-5}\\
    \sum_{i}\alpha_i + \sum_{j}\beta_j+\sum_{l}\sigma_l \le & n\label{app-eq:concave-util-bound-multiple-6}
\end{align}
Plugging \eqref{app-eq:concave-util-bound-multiple-3} and \eqref{app-eq:concave-util-bound-multiple-4} into \eqref{app-eq:concave-util-bound-multiple-1} we obtain
\begin{align*}
    u_i(\theta\ee)\le &u_i(\xx^*)\left(1+(\theta n^2-1)\alpha_i  +n\theta(\sum_j\beta_j+\sum_{l}\sigma_l) - \sum_{jl}x^*_{ijl}(\beta_j+\sigma_l)\right)\le 2n^2\theta u_i(\xx^*),
\end{align*}
where we have used $1+(\theta n^2-1)\alpha_i\le n^2\theta$ since $\alpha_i\le 1$ by \eqref{app-eq:concave-util-bound-multiple-5}, and $\sum_{j}\beta_j+\sum_{l}\sigma_l\le n$ by \eqref{app-eq:concave-util-bound-multiple-6}.
}

\subsection{Proof of Lemma \ref{lemma:bound-2lf}}

\begin{repeattheorem}[Lemma \ref{lemma:bound-2lf}.]
Let $\xx^*\in \calX$ be an optimal solution to $ \max\limits_{\xx \in \calX}\; \sum\limits_{i\in A}\log(u_i(\xx))+\sum\limits_{j\in J}\log(w_j(\xx))$, and define $\bar{w}_{ij}=\frac{w_{ij}}{\max\limits_{i'\in A}\{w_{i'j}\}}$ and $\bbar{w}_{i}=\min\{n, \max\limits_{j\in J}\{\frac{w_{ij}}{\min\limits_{i'\in A}\{w_{i'j}\}}\}\}$. 
 Then, for any $\bar{\xx}\in\R^{n^2}_+$ and $i\in A$:
\begin{align}
    u_i(\xx^*)\ge &\frac{u_i(\bar{\xx})}{\sum\limits_{j\in J}\bar{x}_{ij}(2n-\bar{w}_{ij})}, \label{app-eq:bound-2lf-u-1}\\
    u_i(\xx^*)\ge &\frac{u_i(\bar{\xx})}{\sum\limits_{j\in J}\bar{x}_{ij}(1+\bbar{w}_i-\bar{w}_{ij})+2n\max\limits_{j\in J}\{\bar{x}_{ij}\}}. \label{app-eq:bound-2lf-u-2}
\end{align}
\end{repeattheorem}
\prf{As before, stationary condition implies
\begin{align}
    \frac{u_{ij}}{u_i(\xx^*)} + \frac{w_{ij}}{w_j(\xx^*)} \le \alpha_i +\beta_j \qquad \forall i,j  \label{app-eq:bound-2lf-stationary}
\end{align}
with equality when $x^*_{ij}>0$. 
Multiplying both sides of \eqref{app-eq:bound-2lf-stationary} by $x^*_{ij}$ and summing over $j$ we obtain
\begin{align}
    \alpha_i + \sum_{j}\beta_jx^*_{ij}=\frac{\sum_{j}u_{ij}x^*_{ij}}{u_i(\xx^*)} + \sum_{j}\frac{w_{ij}x^*_{ij}}{w_j(\xx^*)}=1+\sum_{j}\frac{w_{ij}x^*_{ij}}{w_j(\xx^*)},\label{app-eq:bound-2lf-stationary-1}
\end{align}
where we have used $u_i(\xx^*)=\sum_{j}u_{ij}x^*_{ij}$. In particular, given that $w_j(\xx^*)=\sum_{i}w_{ij}x^*_{ij}$, we deduce $\sum_{j}\frac{w_{ij}x^*_{ij}}{w_j(\xx^*)}\le n$. On the other hand, $\sum_{j}\frac{w_{ij}x^*_{ij}}{w_j(\xx^*)}\le \sum_{j}\frac{w_{ij}}{\min\limits_{i'\in A}\{w_{i'j}\}}x^*_{ij}\le \max\limits_{j\in J}\{\frac{w_{ij}}{\min\limits_{i'\in A}\{w_{i'j}\}}\}$ since $\sum_{j}x^*_{ij}=1$. Thus we obtain $\alpha_i\le 1+\bbar{w}_i$. Additionally, summing \eqref{app-eq:bound-2lf-stationary-1} over $j$ yields
\begin{align}
     \sum_{i}\alpha_i + \sum_{j}\beta_j= 2n. \label{app-eq:bound-2lf-alpha-beta}
\end{align}
Given $w_j(\xx^*)=\sum_{i\in A}w_{ij}\xx^*_{ij}\le \max_{i\in A}\{w_{ij}\}$ we obtain $\frac{w_{ij}}{w_j(\xx^*)}\ge \bar{w}_{ij}$ and deduce from \eqref{app-eq:bound-2lf-stationary}
\begin{align}
    \frac{u_{ij}}{u_i(\xx^*)} + \bar{w}_{ij} \le \alpha_i +\beta_j \qquad \forall i,j  \label{app-eq:bound-2lf-stationary-2}
\end{align}
Multiplying \eqref{app-eq:bound-2lf-stationary-2} by $\bar{x}_{ij}$, summing over $j$, and noting that $u_i(\bar{\xx})=\sum_{j\in J}\bar{x}_{ij}u_{ij}$ we obtain
\begin{align}
    \frac{u_i(\bar{\xx})}{u_i(\xx^*)}\le \sum_{j\in J}\bar{x}_{ij}(\alpha_i +\beta_j - \bar{w}_{ij})  \label{app-eq:bound-2lf-ux}
\end{align}
Noting that $\alpha_i+\beta_j\le 2n$ by \eqref{app-eq:bound-2lf-alpha-beta}, we obtain \eqref{app-eq:bound-2lf-u-1}. On the other hand, from $\alpha_i\le 1+\bbar{w}_i$ and $\sum_{j}\bar{x}_{ij}\beta_{j}\le \max_{j}\{\bar{x}_{ij}\}\sum_{j}\beta_{j}\le 2n  \max_{j}\{\bar{x}_{ij}\}$ we obtain \eqref{app-eq:bound-2lf-u-2}.
}

\subsection{Proof of Proposition~\ref{prop:lowerbound-2lf-kp}}

\begin{repeattheorem}[Proposition~\ref{prop:lowerbound-2lf-kp}.]
Define $v_{ij}=\frac{u_{ij}}{1+\bbar{w}_i-\bar{w}_{ij}}$ and let $J_m$ denote the indices of the top $m$ values in $\{v_{ij}\}_{j\in J}$. Then
\begin{align*}
    u_i(\xx^*)\ge \max_{m\in \range{n}} \left\{\frac{1}{2n+\sum_{j\in J_m}(1+\bbar{w}_i-\bar{w}_{ij})}\sum_{j\in J_m}u_{ij}\right\}.
\end{align*}
\end{repeattheorem}
\prf{The proof follows by evaluating \eqref{eq:bound-2lf-parametric-kp} at $\theta_m = \frac{1}{2n+\sum_{j\in J_m}(1+\bbar{w}_i-\bar{w}_{ij})}$ for each $m\in \range{n}$, where $\theta_m$ is obtained by setting $y_j=\theta$ for $j\in J_m$ and $y_j=0$ for $j\notin J_m$ and solving $\sum_{j\in J_m} (1+\bbar{w}_i-\bar{w}_{ij})\theta= 1-2n\theta$ for $\theta$.
}

\subsection{Proof of Remark \ref{remark:log_lipschitz}}

\begin{repeattheorem}[Remark \ref{remark:log_lipschitz}.]
    $\log_+(v|v_0)$ defined below has $\frac{1}{v_0^2}$-Lipschitz continuous gradient for $v_0>0$.
    \begin{align*}
    \log_+(v|v_0) = \begin{cases}
        \log(v_0)-1+\frac{v}{v_0} & \text{if } v \le v_0\\
        \log(v) & \text{if } v \ge v_0
    \end{cases}
\end{align*}
\end{repeattheorem}
\prf{Let $f(v)=\log_+(v|v_0)$. We need to show that $|f'(v)-f'(w)| \le \frac{1}{v_0^2}|v-w|$ for all $v$ and $w$. Without loss of generality assume that $v<w$. The statement is trivially correct for $w\le v_0$. 
For $v\le v_0$ and $w>v_0$, we have
$|f'(v)-f'(w)| = |\frac{1}{v_0}-\frac{1}{w}| = \frac{w-v_0}{v_0 w}\le \frac{w-v_0}{v_0^2} \le \frac{1}{v_0^2}|v-w|$.
For $w>v\ge v_0$, we have $|f'(v)-f'(w)| = |\frac{1}{v}-\frac{1}{w}|=\frac{1}{vw}(w-v)\le \frac{1}{v_0^2}|v-w|$.
}

\subsection{Proof of Proposition \ref{prop:smooth-nonseparable}}
\begin{repeattheorem}[Proposition \ref{prop:smooth-nonseparable}.]
    Let $u_i(\xx_i)=\min_{k\in \range{\kappa_i}}\left\{\sum_{j\in G}a_{ijk} x_{ij} + b_{ik}\right\}$ be the non-separable piecewise linear concave utility function of agent $i$ in the 1NAD model, and let $u_{i\gamma}(\xx_i)=-e_{\gamma;-u_i}(\xx_i)$ be its smooth concave envelope. Then
    \begin{align}
        u_{i\gamma}(\xx_i) = & \min_{\zz\in \Delta_{\kappa_i}}\left\{\frac{\gamma}{2}\zz^{\top}Q_i\zz+\xx_i^{\top}A_i \zz\right\}, \label{app-eq:Moreau-env-1nad}\\
        \nabla u_{i\gamma}(\xx) = & A_i\zz^*, \label{app-eq:Moreau-env-1nad-grad}
    \end{align}
    where $\Delta_{\kappa_i}=\{\zz\in \R_+^{\kappa_i}: \sum_{k\in \range{\kappa_i}}z_k=1\}$ is the unit simplex, $A_i$ is an $n\times \kappa_i$ matrix with elements $A_{ijk}=a_{ijk}+b_{ik}$, $Q_i=A_i^{\top}A_i$, and $z^*\in \Delta_{\kappa_i}$ is the optimal solution to \eqref{app-eq:Moreau-env-1nad}.
\end{repeattheorem}
\prf{Given that $\sum\limits_{j\in G}x_{ij}=1$, we have $\sum\limits_{j\in G}a_{ijk} x_{ij} + b_{ik}=\sum\limits_{j\in G}(a_{ijk}+b_{ik}) x_{ij}$; thus $-u_i(\xx_i)$ is
\begin{align*}
    -u_i(\xx_i)=-\min_{k\in \range{\kappa_i}}\left\{\sum_{j\in G}(a_{ijk}+ b_{ik}) x_{ij}\right\}=-\min_{\zz\in \Delta_{\kappa_i}}\left\{\xx_i^{\top}A_i \zz\right\}=\max_{\zz\in \Delta_{\kappa_i}}\left\{-\xx_i^{\top}A_i \zz\right\}.
\end{align*}
Therefore, Moreau envelope of $-u_i(\xx_i)$ is
\begin{align}
    e_{\gamma;-u_i}(\xx_i) &=\min_{\yy\in \R^n}\left\{\frac{1}{2\gamma}\|\xx_i-\yy\|^2+\max_{\zz\in \Delta_{\kappa_i}}\left\{-\yy^{\top}A_i \zz\right\}\right\}\nonumber \\
    &=\min_{\yy\in \R^n}\max_{\zz\in \Delta_{\kappa_i}}\left\{\frac{1}{2\gamma}\|\xx_i-\yy\|^2-\yy^{\top}A_i \zz\right\}=\max_{\zz\in \Delta_{\kappa_i}}\min_{\yy\in \R^n}\left\{\frac{1}{2\gamma}\|\xx_i-\yy\|^2-\yy^{\top}A_i \zz\right\}, \label{app-eq:Moreau-env-1nad-proof-1} 
\end{align}
where we have swapped min-max with max-min because of strong duality. 
The inner quadratic minimization in \eqref{app-eq:Moreau-env-1nad-proof-1} admits optimal solution $\yy=\xx_i+\gamma A_i \zz$. Substituting $\yy$ in \eqref{app-eq:Moreau-env-1nad-proof-1} we obtain
\begin{align*}
    e_{\gamma;-u_i}(\xx_i)=\max_{\zz\in \Delta_{\kappa_i}}\left\{-\frac{\gamma}{2}\zz^{\top}Q_i\zz-\xx_i^{\top}A_i \zz\right\},
\end{align*}
which gives the expression for $u_{i\gamma}(\xx_i)$ after multiplying both sides by $-1$. Finally, for the optimal solution $\zz^*$, given that the optimal value for $\yy$ is $\yy^*=\prx_{\gamma;-u_i}(\xx_i)=\xx_i+\gamma A_i \zz^*$, we obtain $\nabla u_{i\gamma}(\xx_i)=-\nabla e_{\gamma;-u_i}(\xx_i) = -\frac{1}{\gamma}\left(\xx-\prx_{\gamma;-u_i}(\xx_i)\right) = A_i \zz^*$.
}

\subsection{Proof of Proposition \ref{prop:smooth-separable}}
\begin{repeattheorem}[Proposition \ref{prop:smooth-separable}.]
     Let $f$ be a separable concave piecewise linear utility function with $\kappa$ segments and $\kappa-1$ breakpoints $\hat{x}_{1}<\hat{x}_{2}<\dots<\hat{x}_{k-1}$. 
    Then the smooth concave envelope of $f$ is
    \begin{align*}
        f_{\gamma}(x) = \begin{cases}
            u_1 x + b_1 + \frac{\gamma}{2}u_1^2 & \text{if } x\in (-\infty, \hat{x}_1-\gamma u_1]\\
            u_1 \hat{x}_1 + b_1 - \frac{1}{2\gamma}(x-\hat{x}_1)^2 & \text{if } x\in [\hat{x}_1-\gamma u_1,  \hat{x}_1-\gamma u_2]\\
            u_2 x + b_2 + \frac{\gamma}{2}u_2^2 & \text{if } x\in [\hat{x}_1-\gamma u_2, \hat{x}_2-\gamma u_2]\\
            u_2 \hat{x}_2 + b_2 - \frac{1}{2\gamma}(x-\hat{x}_2)^2 & \text{if } x\in [\hat{x}_2-\gamma u_2,  \hat{x}_2-\gamma u_3]\\
            \dots & \dots\\
            u_{\kappa-1} \hat{x}_{\kappa-1} + b_{\kappa-1} - \frac{1}{2\gamma}(x-\hat{x}_{\kappa-1})^2 & \text{if } x\in [\hat{x}_{\kappa-1}-\gamma u_{\kappa-1}, \hat{x}_{\kappa-1}-\gamma u_{\kappa}]\\
            u_{\kappa} x + b_{\kappa} + \frac{\gamma}{2}u_{\kappa}^2 & \text{if } x\in [\hat{x}_{\kappa-1}-\gamma u_{\kappa}, \infty)\\
        \end{cases}
    \end{align*}
\end{repeattheorem}
\prf{We prove the result for $\kappa=2$ pieces; the result for $\kappa>2$ follows similarly. Moreau envelope of $-f(x)$ is
\begin{align}
    e_{\gamma;-f}(x)=&\min_{y}\left\{\frac{1}{2\gamma}(x-y)^2-f(y)\right\}\nonumber\\ =& \min\left\{ \min_{y\le \hat{x}_1}\left\{\frac{1}{2\gamma}(x-y)^2-u_1y-b_1\right\}, \min_{y\ge \hat{x}_1}\left\{\frac{1}{2\gamma}(x-y)^2-u_2y-b_2\right\}\right\} \label{app:separable_env_1}
\end{align}
Note that the first expression in \eqref{app:separable_env_1} attains its minimum either at its unconstrained global minimum (i.e., $x+\gamma u_1$) or at the boundary (i.e., $\hat{x}_1$). The former is attained when $x+\gamma u_1\le \hat{x}_1$ and latter is attained when $x+\gamma u_1\ge \hat{x}_1$. Similarly, the second expression in \eqref{app:separable_env_1} attains its minimum either at its unconstrained global minimum (i.e., $x+\gamma u_2$) when $x+\gamma u_2\ge \hat{x}_1$ or at the boundary (i.e., $\hat{x}_1$) when $x+\gamma u_2\le \hat{x}_1$. Given that $u_1 > u_2$, and by definition of the global minima for each segment, we can characterize the global optimal value of $y$ as follows:
\begin{align*}
    y^* = \begin{cases}
        \gamma u_1 + x & \quad\text{if } x\in (-\infty, \hat{x}_1-\gamma u_1]\\
        \hat{x}_1 & \quad\text{if } x\in [\hat{x}_1-\gamma u_1,  \hat{x}_1-\gamma u_2]\\
        \gamma u_2 + x & \quad\text{if } x\in [\hat{x}_1-\gamma u_2, \infty)\\
    \end{cases}
\end{align*}
Noting that $u_1\hat{x}_1+b_1=u_2\hat{x}_1+b_2$, we may rewrite the Moreau envelope as 
\begin{align}
    e_{\gamma;-f}(x)=\begin{cases}
        -u_1 x - b_1 - \frac{\gamma}{2}u_1^2 & \quad\text{if } x\in (-\infty, \hat{x}_1-\gamma u_1]\\
        -u_1 \hat{x}_1 - b_1 + \frac{1}{2\gamma}(x-\hat{x}_1)^2 & \quad\text{if } x\in [\hat{x}_1-\gamma u_1,  \hat{x}_1-\gamma u_2]\\
        -u_2 x - b_2 - \frac{\gamma}{2}u_2^2 & \quad\text{if } x\in [\hat{x}_1-\gamma u_2, \infty)\\
    \end{cases} \label{app:separable_env_2}
\end{align}
Multiplying by -1 gives the concave envelope of $f$.
}

\subsection{Proof of Proposition \ref{prop:initial_bound}}

\begin{repeattheorem}[Proposition \ref{prop:initial_bound}.]
    Provided the matching market model admits an integral optimal solution,  then (i) the lowerbound provided by \eqref{op:initial} is tight, and (ii) both FW and CP algorithms terminate after one iteration when initialized by the solution produced by \eqref{op:initial}.
\end{repeattheorem}
\prf{
	We prove the result for one-sided; the proof for \textit{2LF} follows similarly. 
	Let $\xx^*\in \calX$ be the optimal solution. Since $x^*_{ij}\in\{0,1\}$ and $\sum_{j\in G}x^*_{ij}=1$ for each $i$, we have $\xx^*_i=\ee_j$ for $j$ such that $x^*_{ij}=1$, implying $u_i(\xx)=\sum_{j\in G}x^*_{ij}u_i(\ee_i)$. Furthermore, optimality of $\xx^*$ implies that $x^*_{ij}=0$ for $i$ and $j$ such that $u_{ij}-c_i\le 0$. Therefore, $\log(u_i(\xx^*)-c_i)=\sum_{j\in G}x^*_{ij}\log(u_i(\ee_j)-c_i)=\sum_{j\in G}x^*_{ij}\log_M(u_i(\ee_j)-c_i)$, which also proves optimality of $\xx^*$ for \eqref{op:initial}. 

 Strict concavity of the objective functions in the market models imply that CP and FW need only one iteration to prove optimality of the initial solution when the initial solution is indeed optimal.
}

\clearpage
\section{Entropic Multidimensional Matching}\label{app:entropic_multimatching}

Algorithm~\ref{pseudo-code-sinkhorn} presents our implementation of Sinkhorn-Knopp scaling algorithm extended to multiple marginals.

\begin{algorithm}[h!]
    \textbf{Input:} Gradient tensor $\gb$, entropy weight $\zeta$
    
	\textbf{Initialization:} Set $m'=m+1$, $\epsilon'=\frac{m'\zeta\log(n)}{8\|\gb\|_{\infty}}$, and $\tilde{\gb}=\exp\left(\frac{\gb}{\zeta}\right)$, with element-wise $\exp$.

    \vspace{0.2cm}
    
    \textbf{Step 1 (Scaling):} Set $s=0$, $E^{(0)}=\infty$, and let $\rr$ and $\bbeta^{(0)}$ be  matrices of all ones in $\R^{m'\times n}$.
    
	\While{$E^{(s)}>\epsilon'$}{
        Let $\hat{\xx}^{(s)}$ be a tensor with $\hat{x}^{(s)}_{i_1,i_2,\dots,i_{m'}}=\tilde{g}_{i_1,i_2,\dots,i_{m'}} \prod_{k=1}^{m'}\beta^{(s)}_{k,i_k}$.

        Let $\bar{\rr}_k$ be the $k^{\text{th}}$ margin of $\hat{\xx}^{(s)}$, and choose marginal $$k'=\argmax_{k\in\range{m'}}\left\{\sum_{i=1}^n r_{ki}-\bar{r}_{ki}+r_{ki}\log\left(\frac{r_{ki}}{\bar{r}_{ki}}\right)\right\}.$$
  
        Compute $E^{(s)}=\sum_{k=1}^{m'}\sum_{i=1}^n|r_{ki}-\bar{r}_{ki}|$.
        
		Let $\bbeta^{(s+1)} = \bbeta^{(s)}$, and set $\beta^{(s+1)}_{k',i} = \beta^{(s)}_{k',i}\frac{r_{k',i}}{\bar{r}_{k',i}}$ for each $i\in \range{n}$.

		Increment $s\leftarrow s+1$
	}

    \vspace{0.4cm}
    
    \textbf{Step 2 (Rounding):} Let $\hat{\xx}$ be the tensor produced at the end of \textbf{Step 1}.

    \For{$k=1$ to $m'$}{
       Let $\bar{\rr}_k$ be the $k^{\text{th}}$ margin of $\hat{\xx}$ and compute $\zz\in \R^n$ with $z_i=\min\left\{1, \frac{r_{ki}}{\bar{r}_{ki}}\right\}$ for each $i\in\range{n}$.

       \For{$j=1$ to $n$}{
       Update $\hat{x}_{i_1,i_2,\dots,i_{m'}}\leftarrow z_{j}\hat{x}_{i_1,i_2,\dots,i_{m'}}$ in which $i_k=j$ is fixed.
       }
    }

    Let $\bar{\rr}_k$ be the $k^{\text{th}}$ margin of $\hat{\xx}$ and compute $\ee_{k}=\rr_{k}-\bar{\rr}_{k}$ for each $k\in\range{m'}$. 

    Update $\hat{x}_{i_1,i_2,\dots,i_{m'}}\leftarrow \hat{x}_{i_1,i_2,\dots,i_{m'}} + \frac{\prod_{k=1}^{m'} e_{k,i_k}}{\|\ee_1\|_1^{m'-1}}$

    \textbf{Output:} $\hat{\xx}$
	\caption{Entropic $(m+1)$-dimensional matching}
	\label{pseudo-code-sinkhorn}
\end{algorithm}

\section{Supplementary Numerical Results}\label{app:results}

Detailed results for {\em 1LAD}, {\em 1LF} and {\em 2LF} are given in Tables~\ref{tab:results-1lad}, \ref{tab:results-1lf} and \ref{tab:results-2lf}, respectively. 
Each entry is the average value over 30 instances (10 replication for each $\rho$), while ``Mem.'' indicates running out of memory.
For each number of agents/goods ($n$), we present the time to produce the initial solution ($t_0$) as described in Section~\ref{sec:initial_matching}, optimality gap of the initial solution ($g_0$) for FW and CP according to equations \eqref{eq:gap-fw} and \eqref{eq:gap-cp}, respectively, and total computation times ($t_{\text{T}}$). 

Figure~\ref{fig:example} illustrates an example where FW produces a rational optimal solution while CP does not.

\begin{table}[t]
\footnotesize
\centering
\noindent\begin{tabular*}{\columnwidth}{@{\extracolsep{\stretch{1}}}*{12}{@{}rrrrrrrrrrrr@{}}}
\toprule
 & \multicolumn{5}{c}{\textbf{binary}} &  & \multicolumn{5}{c}{\textbf{nonbinary}} \\ \midrule
n & \multicolumn{1}{c}{$t_0$} & \multicolumn{1}{c}{$g_0$[FW]} & \multicolumn{1}{c}{$g_0$[CP]} & \multicolumn{1}{c}{$t_{\text{T}}$[FW]} & \multicolumn{1}{c}{$t_{\text{T}}$[CP]} &  & \multicolumn{1}{c}{$t_0$} & \multicolumn{1}{c}{$g_0$[FW]} & \multicolumn{1}{c}{$g_0$[CP]} & \multicolumn{1}{c}{$t_{\text{T}}$[FW]} & \multicolumn{1}{c}{$t_{\text{T}}$[CP]} \\ \cmidrule(r){1-6} \cmidrule(l){8-12} 
10 & 0.00 & 12.50\% & 0.04\% & 0.09 & 0.00 &  & 0.00 & 15.14\% & 6.71\% & 0.01 & 0.00 \\
20 & 0.00 & \textless{}0.01\% & \textless{}0.01\% & 0.00 & 0.00 &  & 0.00 & 0.59\% & 1.31\% & 0.00 & 0.00 \\
50 & 0.00 & \textless{}0.01\% & \textless{}0.01\% & 0.00 & 0.01 &  & 0.00 & 0.05\% & 0.11\% & 0.00 & 0.01 \\
100 & 0.00 & \textless{}0.01\% & \textless{}0.01\% & 0.00 & 0.04 &  & 0.01 & 8.61\% & 0.40\% & 0.78 & 0.06 \\
200 & 0.01 & \textless{}0.01\% & \textless{}0.01\% & 0.02 & 0.07 &  & 0.00 & 0.18\% & 0.16\% & 0.04 & 0.16 \\
500 & 0.04 & \textless{}0.01\% & \textless{}0.01\% & 0.09 & 0.75 &  & 0.02 & \textless{}0.01\% & \textless{}0.01\% & 0.09 & 0.67 \\
1000 & 0.17 & \textless{}0.01\% & \textless{}0.01\% & 0.30 & 4.74 &  & 0.11 & \textless{}0.01\% & \textless{}0.01\% & 0.26 & 1.96 \\
2000 & 0.61 & \textless{}0.01\% & \textless{}0.01\% & 1.22 & 46.03 &  & 0.42 & \textless{}0.01\% & \textless{}0.01\% & 1.12 & 15.73 \\
5000 & 3.84 & \textless{}0.01\% & Mem. & 7.25 & Mem. &  & 2.45 & \textless{}0.01\% & Mem. & 6.58 & Mem. \\
10000 & 16.90 & \textless{}0.01\% & Mem. & 32.98 & Mem. &  & 8.99 & \textless{}0.01\% & Mem. & 25.90 & Mem. \\
20000 & 66.26 & \textless{}0.01\% & Mem. & 123.27 & Mem. &  & 37.18 & \textless{}0.01\% & Mem. & 107.77 & Mem. \\ \bottomrule
\end{tabular*}
\caption{Computational results for {\em 1LAD}}
	\label{tab:results-1lad}

\vspace{0.5cm}

\footnotesize
\centering
\noindent\begin{tabular*}{\columnwidth}{@{\extracolsep{\stretch{1}}}*{12}{@{}rrrrrrrrrrrr@{}}}
\toprule
 & \multicolumn{5}{c}{\textbf{binary}} &  & \multicolumn{5}{c}{\textbf{nonbinary}} \\ \midrule
n & \multicolumn{1}{c}{$t_0$} & \multicolumn{1}{c}{$g_0$[FW]} & \multicolumn{1}{c}{$g_0$[CP]} & \multicolumn{1}{c}{$t_{\text{T}}$[FW]} & \multicolumn{1}{c}{$t_{\text{T}}$[CP]} &  & \multicolumn{1}{c}{$t_0$} & \multicolumn{1}{c}{$g_0$[FW]} & \multicolumn{1}{c}{$g_0$[CP]} & \multicolumn{1}{c}{$t_{\text{T}}$[FW]} & \multicolumn{1}{c}{$t_{\text{T}}$[CP]} \\ \cmidrule(r){1-6} \cmidrule(l){8-12} 
10 & 0.00 & \textless{}0.01\% & 5.74\% & 0.00 & 0.02 &  & 0.00 & 3.69\% & 7.63\% & 0.00 & 0.00 \\
20 & 0.00 & \textless{}0.01\% & \textless{}0.01\% & 0.00 & 0.00 &  & 0.00 & 0.57\% & 1.31\% & 0.00 & 0.00 \\
50 & 0.00 & \textless{}0.01\% & \textless{}0.01\% & 0.00 & 0.01 &  & 0.00 & 0.05\% & 0.10\% & 0.01 & 0.01 \\
100 & 0.00 & \textless{}0.01\% & 0.32\% & 0.00 & 0.05 &  & 0.00 & 0.19\% & 0.18\% & 0.04 & 0.05 \\
200 & 0.00 & \textless{}0.01\% & \textless{}0.01\% & 0.01 & 0.07 &  & 0.00 & 0.14\% & 0.14\% & 0.03 & 0.14 \\
500 & 0.01 & \textless{}0.01\% & \textless{}0.01\% & 0.05 & 0.62 &  & 0.02 & \textless{}0.01\% & \textless{}0.01\% & 0.08 & 0.72 \\
1000 & 0.06 & \textless{}0.01\% & \textless{}0.01\% & 0.20 & 4.56 &  & 0.09 & \textless{}0.01\% & \textless{}0.01\% & 0.24 & 1.96 \\
2000 & 0.23 & \textless{}0.01\% & \textless{}0.01\% & 0.84 & 46.87 &  & 0.42 & \textless{}0.01\% & \textless{}0.01\% & 1.09 & 15.89 \\
5000 & 1.33 & \textless{}0.01\% & Mem. & 5.01 & Mem. &  & 2.35 & \textless{}0.01\% & Mem. & 6.75 & Mem. \\
10000 & 4.18 & \textless{}0.01\% & Mem. & 20.47 & Mem. &  & 8.43 & \textless{}0.01\% & Mem. & 26.86 & Mem. \\
20000 & 22.17 & \textless{}0.01\% & Mem. & 87.39 & Mem. &  & 36.69 & \textless{}0.01\% & Mem. & 115.67 & Mem. \\ \bottomrule
\end{tabular*}
\caption{Computational results for {\em 1LF}}
	\label{tab:results-1lf}

\vspace{0.5cm}

\footnotesize
\centering
\noindent\begin{tabular*}{\columnwidth}{@{\extracolsep{\stretch{1}}}*{12}{@{}rrrrrrrrrrrr@{}}}
\toprule
 & \multicolumn{5}{c}{\textbf{binary}} &  & \multicolumn{5}{c}{\textbf{nonbinary}} \\ \midrule
n & \multicolumn{1}{c}{$t_0$} & \multicolumn{1}{c}{$g_0$[FW]} & \multicolumn{1}{c}{$g_0$[CP]} & \multicolumn{1}{c}{$t_{\text{T}}$[FW]} & \multicolumn{1}{c}{$t_{\text{T}}$[CP]} &  & \multicolumn{1}{c}{$t_0$} & \multicolumn{1}{c}{$g_0$[FW]} & \multicolumn{1}{c}{$g_0$[CP]} & \multicolumn{1}{c}{$t_{\text{T}}$[FW]} & \multicolumn{1}{c}{$t_{\text{T}}$[CP]} \\ \cmidrule(r){1-6} \cmidrule(l){8-12} 
10 & 0.00 & \textless{}0.01\% & 1.44\% & 0.00 & 0.00 &  & 0.00 & 25.47\% & 16.06\% & 0.01 & 0.00 \\
20 & 0.00 & \textless{}0.01\% & 1.58\% & 0.00 & 0.01 &  & 0.00 & 3.84\% & 5.13\% & 0.04 & 0.01 \\
50 & 0.00 & \textless{}0.01\% & 0.17\% & 0.00 & 0.01 &  & 0.00 & 7.24\% & 3.96\% & 0.43 & 0.08 \\
100 & 0.00 & \textless{}0.01\% & 0.05\% & 0.00 & 0.27 &  & 0.00 & 1.16\% & 0.95\% & 0.46 & 0.27 \\
200 & 0.00 & \textless{}0.01\% & 0.09\% & 0.02 & 0.80 &  & 0.00 & 0.34\% & 0.25\% & 0.19 & 0.46 \\
500 & 0.01 & \textless{}0.01\% & 0.06\% & 0.11 & 3.33 &  & 0.02 & 0.08\% & 0.04\% & 0.45 & 1.74 \\
1000 & 0.06 & \textless{}0.01\% & 0.03\% & 0.38 & 35.34 &  & 0.09 & 0.03\% & 0.01\% & 0.80 & 4.69 \\
2000 & 0.25 & \textless{}0.01\% & 0.02\% & 1.60 & 1144.64 &  & 0.37 & 0.01\% & \textless{}0.01\% & 2.20 & 19.84 \\
5000 & 1.44 & \textless{}0.01\% & Mem. & 8.96 & Mem. &  & 1.96 & 0.99\% & Mem. & 128.06 & Mem. \\
10000 & 4.33 & \textless{}0.01\% & Mem. & 37.99 & Mem. &  & 7.84 & 0.20\% & Mem. & 162.64 & Mem. \\
20000 & 18.77 & \textless{}0.01\% & Mem. & 229.31 & Mem. &  & 37.37 & \textless{}0.01\% & Mem. & 239.64 & Mem. \\ \bottomrule
\end{tabular*}
\caption{Computational results for {\em 2LF}}
	\label{tab:results-2lf}
\end{table}

\begin{figure}[h]
	\centering
	\subfloat[][\centering Utility matrix]{\footnotesize
		\scalebox{0.7}{
			\setlength{\tabcolsep}{3pt}
			\begin{tabular}{|llllllllll|}
				\hline
				1 & 0 & 0 & 1 & 0 & 0 & 1 & 1 & 0 & 0 \\
				1 & 1 & 1 & 1 & 0 & 0 & 0 & 0 & 0 & 0 \\
				1 & 0 & 1 & 0 & 0 & 1 & 0 & 0 & 0 & 0 \\
				1 & 1 & 0 & 0 & 0 & 0 & 0 & 0 & 0 & 0 \\
				1 & 0 & 1 & 0 & 0 & 0 & 0 & 1 & 0 & 0 \\
				0 & 0 & 0 & 0 & 0 & 0 & 0 & 1 & 0 & 0 \\
				1 & 0 & 0 & 0 & 0 & 0 & 0 & 0 & 0 & 0 \\
				0 & 0 & 0 & 1 & 0 & 0 & 0 & 0 & 1 & 0 \\
				0 & 0 & 0 & 0 & 0 & 0 & 0 & 0 & 1 & 1 \\
				0 & 1 & 0 & 1 & 0 & 0 & 0 & 0 & 0 & 0\\
				\hline
	\end{tabular}
}}%
	\;
	\subfloat[][\centering Solution produced by FW]{\footnotesize
			\scalebox{0.7}{
				\setlength{\tabcolsep}{3pt}
				\begin{tabular}{|llllllllll|}
					\hline
					0          & 0          & 0          & 0          & 0          & 0 & 1 & 0          & 0 & 0 \\
					0.1$\bar{6}$ & 0.1$\bar{6}$ & 0.$\bar{3}$ & 0.1$\bar{6}$ & 0.1$\bar{6}$  & 0 & 0 & 0          & 0 & 0 \\
					0          & 0          & 0          & 0          & 0          & 1 & 0 & 0          & 0 & 0 \\
					0          & 0.8$\bar{3}$ & 0          & 0          & 0.1$\bar{6}$  & 0 & 0 & 0          & 0 & 0 \\
					0          & 0          & 0.$\bar{6}$ & 0          & 0.1$\bar{6}$ & 0 & 0 & 0.1$\bar{6}$ & 0 & 0 \\
					0          & 0          & 0          & 0          & 0.1$\bar{6}$ & 0 & 0 & 0.8$\bar{3}$ & 0 & 0 \\
					0.8$\bar{3}$ & 0          & 0          & 0          & 0.1$\bar{6}$ & 0 & 0 & 0          & 0 & 0 \\
					0          & 0          & 0          & 0          & 0          & 0 & 0 & 0          & 1 & 0 \\
					0          & 0          & 0          & 0          & 0          & 0 & 0 & 0          & 0 & 1 \\
					0          & 0          & 0          & 0.8$\bar{3}$ & 0.1$\bar{6}$ & 0 & 0 & 0          & 0 & 0\\
					\hline
	\end{tabular}}}%
	\;
	\subfloat[][\centering Solution produced by CP]{\footnotesize
			\scalebox{0.7}{
				\setlength{\tabcolsep}{2pt}
				\begin{tabular}{|llllllllll|}
					\hline
					0          & 0          & 0          & 0          & 0          & 0 & 1 & 0          & 0 & 0 \\
					0.16666333 & 0.16658701 & 0.33291273 & 0.16694404 & 0.1668929  & 0 & 0 & 0          & 0 & 0 \\
					0          & 0          & 0          & 0          & 0          & 1 & 0 & 0          & 0 & 0 \\
					0          & 0.83332482 & 0          & 0          & 0.1666716  & 0 & 0 & 0          & 0 & 0 \\
					0          & 0          & 0.66708727 & 0          & 0.16645611 & 0 & 0 & 0.16645662 & 0 & 0 \\
					0          & 0          & 0          & 0          & 0.16645662 & 0 & 0 & 0.83354338 & 0 & 0 \\
					0.83333309 & 0          & 0          & 0          & 0.16666691 & 0 & 0 & 0          & 0 & 0 \\
					0          & 0          & 0          & 0          & 0          & 0 & 0 & 0          & 1 & 0 \\
					0          & 0          & 0          & 0          & 0          & 0 & 0 & 0          & 0 & 1 \\
					0          & 0          & 0          & 0.83305596 & 0.16685587 & 0 & 0 & 0          & 0 & 0\\
					\hline
	\end{tabular}}}%
	\caption{An example where FW produces a rational optimal solution for an instance of {\em 1LF}.}%
	\label{fig:example}
\end{figure}

\end{APPENDICES}

\end{document}